\newsavebox{\@brx}
\newcommand{\llangle}[1][]{\savebox{\@brx}{\(\m@th{#1\langle}\)}%
  \mathopen{\copy\@brx\kern-0.5\wd\@brx\usebox{\@brx}}}
\newcommand{\rrangle}[1][]{\savebox{\@brx}{\(\m@th{#1\rangle}\)}%
  \mathclose{\copy\@brx\kern-0.5\wd\@brx\usebox{\@brx}}}
\newtheorem{theorem}{Theorem}[section]
\newtheorem{proposition}[theorem]{Proposition}
\newtheorem{definition}{Definition}
\newenvironment{proof}[1][Proof]{\begin{trivlist}
\item[\hskip \labelsep {\bfseries #1}]}{\end{trivlist}}
\DeclareMathOperator{\tr}{tr}
\DeclareMathOperator{\id}{\mathbb{I}}
\DeclareMathOperator{\LL}{\mathcal{L}}
\newcommand{\qed}{\nobreak \ifvmode \relax \else
      \ifdim\lastskip<1.5em \hskip-\lastskip
      \hskip1.5em plus0em minus0.5em \fi \nobreak
      \vrule height0.75em width0.5em depth0.25em\fi}
\begin{document}
\title{Observer-dependent locality of quantum events}
\author{Philippe Allard Gu\'{e}rin}
\affiliation{Faculty of Physics, University of Vienna, Boltzmanngasse 5, 1090 Vienna, Austria}
\affiliation{Institute for Quantum Optics and Quantum Information (IQOQI), Austrian Academy of Sciences, Boltzmanngasse 3, 1090 Vienna, Austria}

\author{\v{C}aslav Brukner}
\affiliation{Faculty of Physics, University of Vienna, Boltzmanngasse 5, 1090 Vienna, Austria}
\affiliation{Institute for Quantum Optics and Quantum Information (IQOQI), Austrian Academy of Sciences, Boltzmanngasse 3, 1090 Vienna, Austria}

\date{\today}

\begin{abstract}

In general relativity, the causal structure between events is dynamical, but it is definite and observer-independent; events are point-like and the membership of an event $A$ in the future or past light-cone of an event $B$ is an observer-independent statement. When events are defined with respect to quantum systems however, nothing guarantees that the causal relationship between $A$ and $B$ is definite. We propose to associate a \textit{causal reference frame} corresponding to each event, which can be interpreted as an observer-dependent time according to which an observer describes the evolution of quantum systems. In the causal reference frame of one event, this particular event is always localised, but other events can be "smeared out" in the future and in the past. We do not impose a predefined causal order between the events, but only require that descriptions from different reference frames obey a global consistency condition. We show that our new formalism is equivalent to the pure process matrix formalism~\cite{Araujo2017}. The latter is known to predict certain multipartite correlations, which are incompatible with the assumption of a causal ordering of the events -- these correlations violate causal inequalities. We show how the causal reference frame description can be used to gain insight into the question of realisability of such strongly non-causal processes in laboratory experiments. As another application, we use causal reference frames to revisit a thought experiment~\cite{Zych2017} where the gravitational time dilation due to a massive object in a quantum superposition of positions leads to a superposition of the causal ordering of two events.

\end{abstract}

\maketitle

\section{Introduction}

The usual formalism of quantum mechanics explicitly depends on a background spacetime; this is indeed one of the major conceptual obstacles to a quantum theory of gravity~\cite{Rovelli2004, Smolin2011, Kiefer2012, Isham1993}. In quantum field theory, we must first specify a space-time with a fixed metric, before we can define quantum fields as operator-valued distributions on this space-time. Matter is described by quantum mechanics, and it is allowed to be in a quantum superposition of two positions. But since Einstein's equations relates the mass-energy distribution to the metric, we expect something like a "quantum superposition of spacetime metrics" to accompany the superposition of position of the matter~\cite{Feynman1957}. As remarked by Butterfield and Isham, \textit{once we embark on constructing a quantum theory of gravity, we expect some sort of quantum fluctuations in the metric, and so also in the causal structure. But in that case, how are we to formulate a quantum theory with a fluctuating causal structure?}~\cite{Butterfield1999} 

Regardless of the specific details of an underlying theory of quantum gravity, the superposition principle makes it reasonable to expect that in some low-energy limit (whose precise nature could only be rigorously established from a complete theory) quantum superpositions of classical solutions to Einstein's equations can occur. In the recent work by Zych et al.~\cite{Zych2017}, it is argued that a quantum superposition of matter could lead to the quantum superposition of the causal orders of two events~\cite{Chiribella2012}, due to gravitational time-dilation. Their description of the situation proceeds from the point of view of a far-away observer who is not affected by the gravitational field. One might question whether such an outside description is necessary, and ask whether indefinitely-causal processes admit a \textit{relational} description~\cite{Rovelli1996}, from the point of view of the local observers.

In general relativity, events are defined with respect to localised physical systems (for example, the intersection of the world-lines of two particles is an event), and causality is a relationship between events. A classical event is "point-like": mathematically it is represented by an equivalence class, with respect to the diffeomorphism group, of points on the spacetime manifold. Can a similar definition of events be provided for quantum systems, and if so, will the point-like nature of events persist?

In this work we provide an operational definition of events for quantum systems, and study causality as the relationship between such events. We formalise this in Section~\ref{sec:causal_frames}, where we associate an observer to each event, and postulate a corresponding \textit{causal reference frame}, which may be interpreted as an observer-dependent time that the observer uses to parametrise the evolution of quantum systems. We do not preimpose a well-defined global ordering of the events; we tolerate that according to one event's causal reference frame, the other events might not necessarily be localised in the future or in the past. Instead, we require a weaker \textit{consistency condition}: all observers should agree about the evolution connecting the state in the distant past to the state in the distant future. Thus, the observer-independent localisation of events in general relativity -- the fact that events can be modelled as points on a space-time manifold that is common to all observers --  is weakened, but the consistency condition guarantees that the global causal structure (mathematically, the corresponding process matrix) is still observer-independent.

There is a concrete need to understand how the usual ideas of causality (which depend on a fixed classical metric) are modified by quantum mechanics; formalisms that may help address this question have been proposed by Hardy~\cite{Hardy2005, Hardy2007, Hardy2009, Hardy2016} and Oeckl~\cite{Oeckl2008,Oeckl2013, Oeckl2016}. The process matrix formalism~\cite{Oreshkov2012} is closely related to the above approaches, and also makes it possible to study multipartite quantum correlations without the assumption of a definite causal order between the parties. The quantum switch~\cite{Chiribella2013} is an example of a non-causal process that has been implemented in the laboratory~\cite{Procopio2015, Rubino2017}. Other processes can violate device-independent causal inequalities; unfortunately these processes are so far lacking a physical interpretation. Non-causal processes offer interesting advantages for information processing~\cite{Chiribella2012, Araujo2014_PRL, Feix2015, Guerin2016, Baumeler2017, Baumeler2018, Araujo2017_CTC}, so it is important to understand which of them could in principle be implemented in the laboratory. Recently, Ara\'{u}jo et al.~\cite{Araujo2017} have defined pure processes (that can be understood as unitary supermaps) and proposed a purification postulate, which rules out processes that do not admit a purification. One of their motivations for imposing this requirement is that only purifiable processes are compatible with the cherished reversibility of the fundamental laws of physics, in that they do not cause "information paradoxes". Nonetheless, some pure processes are known to violate causal inequalities, showing that purifiability alone is not enough to single out the processes with a known physical implementation.

In Section~\ref{sec:pure_formalism}, we show that there is a one-to-one correspondence between pure process matrices and our new description of quantum causal structures in terms of causal reference frames. This equivalence yields a different physical justification for the purification postulate of Ref.~\cite{Araujo2017}: pure processes are those that allow an (observer-dependent) description in terms of a quantum system evolving in time. 
We show how known examples of processes can be understood in terms of causal reference frames. Causally ordered processes are those for which the locality of events is observer independent: in the causal reference frame of any event, all other events are localised either in the past or in the future. A more interesting example is the quantum switch, where according to event $A$'s causal reference frame, event $B$ is in a controlled superposition of being in the future or in the past (and vice-versa). In Section~\ref{sec:swiss}, we study a new example of a causal inequality violating pure tripartite process, obtained by taking the time-reverse of a known non-causal classical process~\cite{Baumeler2015, Araujo2017}. We point out some curious features in the causal frame description of this process, which may explain why such processes do not have a known realisation in the laboratory.

Finally, in Section~\ref{sec:gravity_switch}, we revisit the thought experiment of the gravitational quantum switch, and show how causal reference frames can be applied in that context. We show how a judicious change of coordinates can be used to bring two different classical spacetimes into a form that corresponds to the causal reference frame of a particular event. We then invoke the superposition principle and obtain a representation of the gravitational quantum switch, in the causal reference frame of that event.

The consequences of the fact that quantum mechanical events do not occur at a well-defined instant in time has been studied by many authors, including among others Refs.~\cite{Peres1985, Reisenberger2002, Oppenheim2000, Brunetti2002, Micanek1996, Aharonov1998}, and the lack of  a ``common time reference'' shared between the parties in the quantum switch was noted in Ref.~\cite{Oreshkov2016}. Motivated by the question of whether experimental implementations of the quantum switch can be considered to be genuine, Oreshkov recently argued that in bipartite pure processes, the parties can be said to act on ``time-delocalised subsystems''~\cite{Oreshkov2018}. Our approach is complementary and seeks to describe the \textit{time evolution} of a quantum system according to a reference frame associated to one of the parties. In Section~\ref{sec:oreshkov} we comment on the mathematical link between the two approaches, and answer in the affirmative to a question that was raised in Ref.~\cite{Oreshkov2018} concerning the existence of a specific representation for all pure multipartite processes.

\section{Quantum theory in the frame of a localised observer}
\label{sec:causal_frames}

\subsection{Events and causality}
\label{subsec:events}

According to Wald's influential textbook on general relativity~\cite{Wald1984}, \textit{we can consider space and time ($\equiv$ spacetime) to be a continuum composed of events, where each event can be thought of as a point of space at an instant of time.} The diffeomorphism invariance of general relativity brings difficulties to the view that points in the spacetime manifold have a physical meaning, via the famous hole argument. If one wants to give physical meaning to the points in the spacetime manifold, then one must conclude that the dynamics of general relativity is underdetermined: a set of initial conditions for the gravity and matter fields (for example on a space-like hypersurface) does not uniquely determine the values for the fields at other points of spacetime, due to the gauge-symmetry corresponding to diffeomorphism invariance. We refer the reader to the reviews~\cite{Norton_SEP, Stachel2014} and references therein for a detailed treatment of the hole arguments and its implications.

Instead, events can be meaningfully defined with respect to physical systems. For example, it might be possible to identify an event unambiguously via statements such as "the place in spacetime where a particular clock reads 10 o'clock", or "the place in spacetime where these two billiard balls collide with each other". More generally, some authors (going all the way back to Einstein) have defined events operationally and in a diffeomorphism-invariant manner via the coincidences of fields or worldlines: an incomplete selection of such approaches is Refs.~\cite{Westman2008, Rovelli1991, Bergmann1961, Hardy2016}. After the physical identification of an event is made, the locality of an event -- its point-like nature -- is observer independent: spatiotemporally distant observers will also agree that the event happened at a some spacetime point. 

Once a set events has been identified with respect to physical systems, we can start asking about the causal ordering between those events. This is done by postulating observers \textit{at these events} that may (or may not) signal to each other by manipulating physical systems. The inclusion of observers (and the ``free choice'' assumption for some of their actions) allows us to characterise a causal structure by the possibilities it offers for signalling. In relativity, the analysis of the causal relations between events is conceptually straightforward and is dictated by the spacetime metric. An event $B$ is contained in the causal future of an event $A$ if there exists a future-directed spacetime path connecting $A$ to $B$ such that the tangent vector to this path is everywhere time-like or null. If this is the case, then $A$ can signal to $B$ (an intervention at $A$ can in principle affect the probabilities for observations at $B$); otherwise signalling  from $A$ to $B$ is impossible. Interestingly, the information about the causal structure between all points of a spacetime is sufficient to reconstruct the topology of spacetime, as well as the metric up to a conformal transformation~\cite{Malament1977}. This fact suggests that studying the causal structures allowed by quantum mechanics could lead to insights about the nature of spacetime in the quantum regime.

Given that our most fundamental theory of matter is quantum mechanics, it is natural to wish for a definition of "quantum events" in terms of quantum systems. There are difficulties in following the strategy of general relativity, and defining spacetime via coincidences of quantum fields, because the matter fields generally don't take well-defined values and are instead represented by operators on a Hilbert space. But one also cannot rely on the existence of classical "rods and clocks" to define events: as a concrete example, consider spontaneous emission -- the phenomena by which the interaction of an atom with the quantised electromagnetic field makes the excited states of the atom decay. Suppose that an atom is prepared in an excited state, while the electromagnetic field is in the vacuum state. To this preparation is associated a probability for a photon to be detected by a nearby detector, after a certain amount of external time has elapsed. The detection of the photon by the detector defines an event, but this event does not occur at a pre-defined value of the background time. This simple example motivates our requirement that quantum mechanical events (such as the "clicking" of a detector) must be identified with respect to physical systems, rather than by referring to an external classical space-time.

In this work, we take an operational approach that does not refer to a background spacetime, and identify events with basic experimental procedures that act on quantum systems. More precisely, an event is operationally defined with respect to a localised physical system (we refer to the region in which this system is localised as a ``laboratory'') and consists in four ``instantaneous'' steps
\begin{enumerate}
\item Heralding: a signal asserts that the system has entered the laboratory.~\footnote{The heralding step is actually quite subtle to treat in full generality. The signal could come from a measurement on the incoming quantum system, from a classical clock inside the laboratory (waiting until a specific time at which the system is guaranteed to arrive), etc. We will assume that the heralding does not disturb the state of the system and that it happens with probability one (the system is guaranteed to eventually enter the lab). This last assumption is a restriction on the generality of the approach: there are physically relevant situations where the system only enters the lab with some probability, and it would be interesting to further investigate how one can treat such probabilistic events.}
\item Intervention: a choice $x$ is made for the operation to be applied on the system
\item Observation: A classical outcome $a$ is recorded. 
\item Output: a physical system (whose state generally depends on $x$ and $a$) exits the laboratory.
\end{enumerate}

\begin{figure}[t]
\includegraphics[width=3cm]{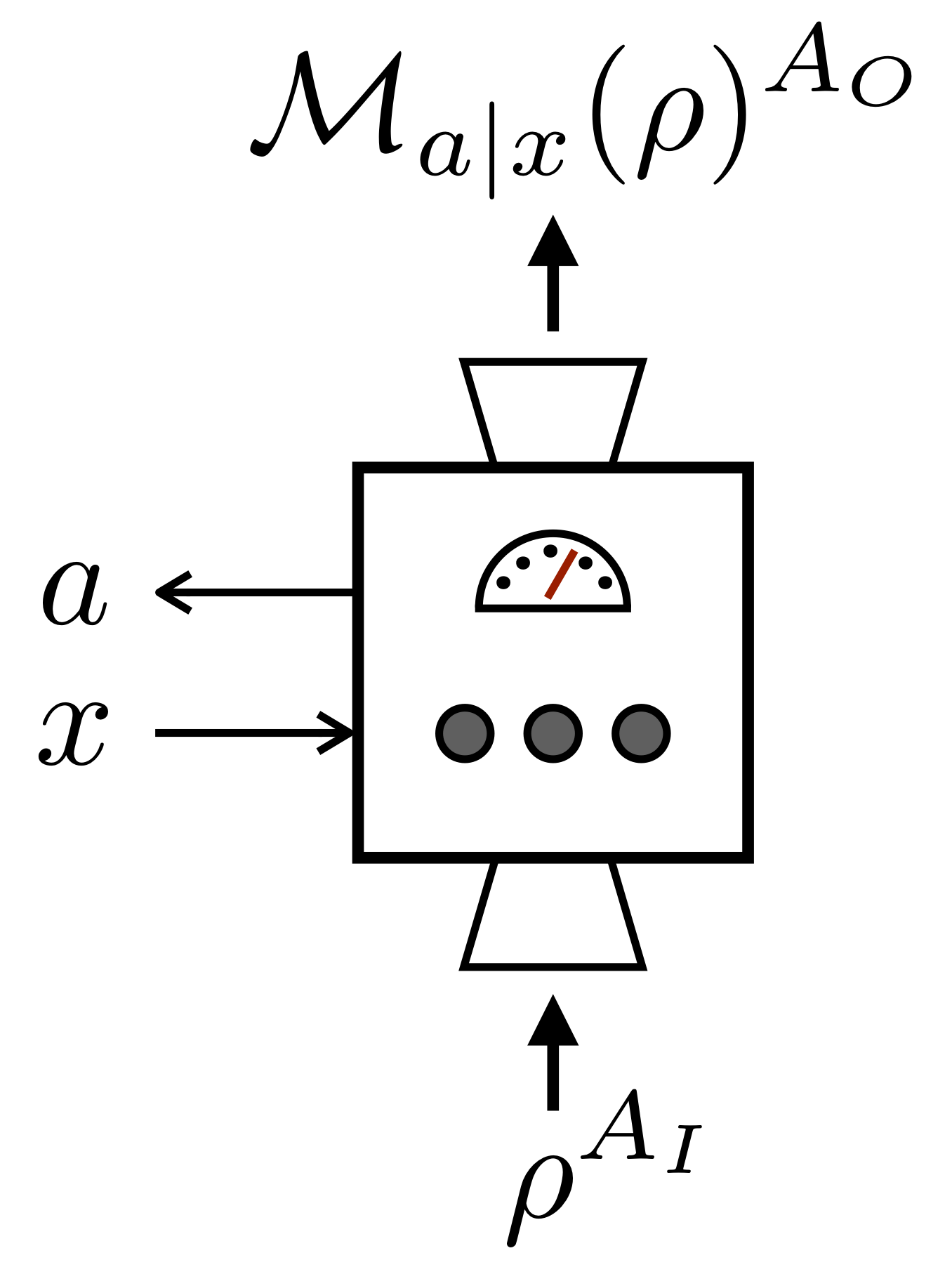}
\caption{The input system is described by density operator $\rho$ on the Hilbert space $A_I$, and the entering of the system in the laboratory heralds the event. After a choice of setting $x$ and the recording of an outcome $a$, the (unnormalised) state of the output system is $\mathcal{M}_{a|x}(\rho)$.}
\centering
\label{fig:quantum_event}
\end{figure}

In quantum theory on finite-dimensional Hilbert spaces, we associate a Hilbert space $A_I$ to the input system of the laboratory, and $A_O$ to its output system. Figure~\ref{fig:quantum_event} depicts the intervention, observation, and output steps. In the intervention step, a choice $x$ of a quantum instrument that acts on the input quantum system is made: this is represented by a collection $\{\mathcal{M}_{a|x}: \mathcal{L}(A_I) \to \mathcal{L}(A_O) \}$ of completely positive maps such that outcome $a$ occurs with probability $p(a|x) = \mathrm{tr}\left( \mathcal{M}_{a|x}( \rho) \right)$, when the input state is $\rho$. The output state is then
\begin{equation}
\rho^{A_I} \mapsto \frac{\mathcal{M}_{a|x}(\rho^{A_I})}{ \mathrm{tr}\left( \mathcal{M}_{a|x}( \rho) \right)}.
\label{eq:inst_inside}
\end{equation}
The normalisation of probabilities enforces that $\sum_a \mathcal{M}_{a|x}$ is a completely-positive trace-preserving (CPTP) map for all $x$.

\subsection{The causal reference frame of an event}
\label{subsec:causal_ref}

In this work, we study causality by using operationally defined events as our basic ingredients, rather than relying on a background space and time. We associate to each event an observer, from whose point of view we may describe physics; we will sometimes use the term \textit{observer-event} to emphasise this. To make a connection with the usual time-ordered descriptions of physics, we postulate that there is a \textit{causal reference frame} (which can be interpreted as an observer-dependent time function) associated to each observer-event, according to which this event is localised in space and in time. According to this reference frame, there should be a well-defined evolution from the past to the present, and from the present to the future. 

As discussed in the previous section, events are defined with respect to a quantum system. We will consider events that are defined with respect to a subsystem of some "global" quantum system whose Hilbert space is $\mathcal{H}$. This Hilbert space can be decomposed as $\mathcal{H} = A_I \otimes E_A$, where $A_I$ is identified with the input space of Alice's laboratory, and where $E_A$ is an "environment" on which Alice acts as the identity. The output Hilbert space of Alice's laboratory is $A_O$ and it is assumed to be isomorphic to $A_I$.~\footnote{It is not a real restriction to impose that $A_I \cong A_O$, because we can emulate any process that has $\mathrm{dim} A_I \neq \mathrm{dim} A_O$ by enlarging the smallest Hilbert space and tracing over the unwanted dimensions. It is convenient to label the Hilbert spaces of the global system in the past and in the future -- both isomorphic to $\mathcal{H}$ -- differently, as $P$ and $F$.} If we assume that the global quantum system $\mathcal{H}$ is isolated at times other than that of the event, then the evolutions from $P$ to $A_I \otimes E_A$, and from $A_O \otimes E_A$ to $F$ are unitary. Thus there exists unitaries $\Pi_A: P \to A_I \otimes E_A$ and $\Phi_A: A_O \otimes E_A \to F$ such that the global evolution from past to future, when Alice performs the quantum instrument $\{\mathcal{M}_{a|x} \}$ is
\begin{equation}
\phi_A \circ (\mathcal{M}_{a|x}^{A_I A_O} \otimes \mathcal{I}^{E_A} ) \circ \pi_A,
\label{eq:light-cone_time}
\end{equation}
where $\pi_A(\rho) = \Pi_A \rho \Pi_A^\dagger$, $\phi_A(\rho) = \Phi_A \rho \Phi_A^\dagger$, and $\mathcal{I}^{E_A}$ is the identity map on the environment degrees of freedom. If the state in the distant past is $\rho^P$, the outcome $a$ of the instrument occurs with probability 
\begin{equation}
p(a|x) = \mathrm{tr}\left( \mathcal{M}_{a|x}^{A_I A_O} (\mathrm{tr}_{E_A}(\Pi_A \rho^P \Pi_A^\dagger)\right).
\end{equation}
Equivalently, this evolution can be represented with a quantum circuit as
\begin{equation}
\includegraphics[width=6cm]{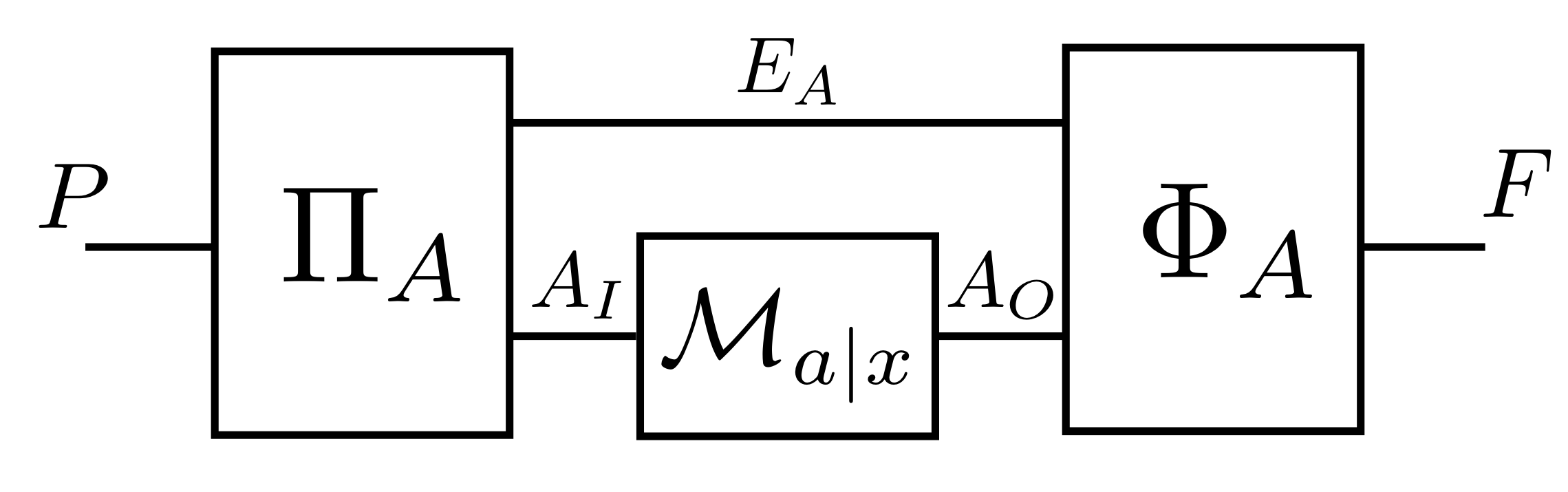}
\label{causal_frame_circuit}
\end{equation}

There is some arbitrariness in the decomposition of the evolution into a past $\Pi_A$ and a future $\Phi_A$, which is similar to the arbitrariness in choosing a time coordinate in relativity. For example, the parts of the evolution that are at "space-like separated locations" from Alice's event can be arbitrarily moved to the future or to the past. Our definition of causal reference frames will deal with this arbitrariness.

Consider now a physical situation comprising of more than one observer-event; for simplicity of notation we consider only two of them, Alice and Bob, whose respective input and output Hilbert spaces are assumed to be isomorphic: $A_I \cong A_O, B_I \cong B_O$. As in Section~\ref{subsec:causal_ref}, each party has an associated causal reference frame, with corresponding unitaries $\Pi_A, \Phi_A$ and $\Pi_B, \Phi_B$. We make the assumption of ``free-choice'', to guarantee that the choice of operation made by the parties can be treated as an independent classical variable. Thus we treat the unitary evolutions in the past and future of Alice's event, $\Pi_A,\Phi_A$ as functions of Bob's choice of instrument, and vice-versa. For the time being, we consider only the case where both parties are performing unitaries on their quantum system, in which case we make the following definitions.
\begin{definition}(Frame functions)

A frame function for Alice is a pair of functions $(\Pi_A, \Phi_A)$ each sending unitaries $U_B: B_I \to B_O$ to unitaries
\begin{align}
\Pi_A(U_B) &: P \to A_I \otimes E_A\\
\Phi_A(U_B) &: A_O \otimes E_A\to F.
\end{align}
\end{definition}
Note that it might not always be possible to extend the domain of the functions $\Pi_A, \Phi_A$ to include all linear operators.~\footnote{We thank David Trillo Fernandez for this observation}

\begin{definition}(Causal reference frame)

Alice's causal reference frame is an equivalence class of frame functions. Two frame functions $(\Pi_A, \Phi_A)$ and $(\Pi_A', \Phi_A')$ are equivalent if
\begin{equation}
\Phi_A(U_B) (U_A \otimes \id^{E_A}) \Pi_A(U_B) = \Phi_A'(U_B) (U_A \otimes \id^{E_A}) \Pi_A'(U_B)
\end{equation}
for all unitaries $U_A:A_I \to A_O$ and $U_B: B_I \to B_O$. In the above, $\id^{E_A}$ is the identity operator that acts on $E_A$. Bob's causal reference frame is defined analogously, with the obvious modifications.
\end{definition}
There are in general many frame functions in the equivalence class. For example, given a frame function $(\Pi_A, \Phi_A)$, and an arbitrary unitary $V^{E_A}$, we see that $\Phi_A' = \Phi_A (\id^{A} \otimes V^{E_A})$ and $\Pi_A' = (\id^A \otimes (V^\dagger)^{E_A}) \Pi_A$ belong to the same causal reference frame. We will usually use a single frame function $(\Pi_A, \Phi_A)$ to define a causal reference frame, where implicitly we mean that $(\Pi_A, \Phi_A)$ is one representative of the equivalence class.

When there are more than one party involved in the process, each party will have its associated causal reference frame. We want to formulate, in the least restrictive way as possible, the requirement that the causal reference of both parties are describing the \textit{same} physical process. In every observer's event causal reference frame (according to its "observer-dependent time function") there is a time in the distant past before which none of the parties has acted yet, and a time in the future after which all the parties have finished acting. We impose a \textit{consistency requirement} (defined formally in Def.~\ref{def:process_rel}), to ensure that the unitary mapping "in" states to "out" states at these distant times is the same for all observers, but we will not assume a well-defined ordering of the events. The role of this requirement is to enforce that the parties are describing the same physical situation.\footnote{It should not be interpreted as equivalent to logical consistency, in the same way that a situation in which Alice says "hello" and Bob hears "goodbye" is logically consistent, but relatively uninteresting for physics.} A way to interpret this requirement is that we want the global evolution from $P$ to $F$ to be observer independent, but we allow its decomposition into a ``past'' and a ``future'' to depend on the observer.
 
 \begin{definition}(Consistent causal reference frames)
 \label{def:process_rel}
 
A pair of causal reference frames $(\Pi_A, \Phi_A), (\Pi_B, \Phi_B)$ for Alice and Bob are \textit{consistent}\footnote{One might object to the fact that the consistency condition of Eq.~(\ref{eq:consistency}) supposes that the parties are describing the state at $P$ and $F$ in the same basis. We could have also defined the consistency condition "up to unitary": in that case, the consistency condition would be that there exists constant unitaries $U, V$ such that $\Phi_A(U_B) (U_A \otimes I^{E_A}) \Pi_A(U_B) = U \Phi_B(U_A) (U_B \otimes I^{E_B}) \Pi_B(U_A) V$. However, this change of basis does not change anything for causality, and can be dealt with separately.} if for all unitaries $U_A: A_I \to A_O$ and $U_B: B_I \to B_O$,
\begin{equation}
\Phi_A(U_B) (U_A \otimes \id^{E_A}) \Pi_A(U_B) = \Phi_B(U_A) (U_B \otimes \id^{E_B}) \Pi_B(U_A):= \mathcal{G}(U_A,U_B).
\label{eq:consistency}
 \end{equation}
\end{definition}
Equivalently, in circuit notation, the consistency condition means that the frame functions must satisfy
\begin{equation}
\includegraphics[width=12cm]{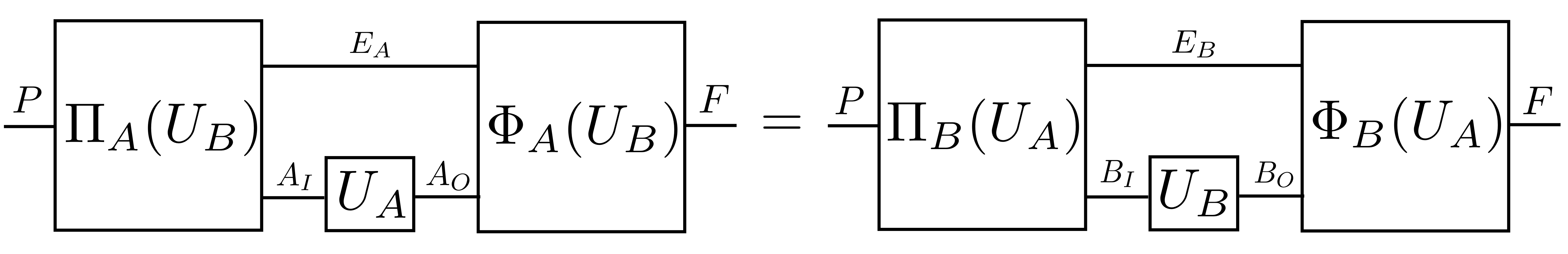},
\label{consistency_circuit}
\end{equation}
for all unitaries $U_A, U_B$.
 
One might prefer the consistency requirement to be formulated purely in terms of device-independent quantities, such as probabilities for the outcomes of measurements. This can be achieved without changing the mathematical description, simply by reinterpreting $P$ as the output Hilbert space of a third party, and $F$ as the input Hilbert space of a fourth party. For the operationally inclined, the quantum evolution $\mathcal{G}(U_A, U_B)$ between $P$ and $F$ is just a concise encoding for the probabilities of measurement outcomes at $F$, conditional on a state preparation at $P$, and on the applied unitaries $U_A, U_B$.

Definition~\ref{def:process_rel} can be easily generalised to $N$ parties $A_1, A_2, ... A_N$, in which case $\Pi_{A_1}$ will be a function of $U_{A_2}, ..., U_{A_N}$, etc. The consistency condition is then to be imposed between the causal frames of all parties.

Our definitions do not yet specify what happens when the parties perform general quantum instruments. In order to study phenomena such as the violation of causal inequalities, we need to calculate the outcome probabilities of general quantum instruments, and in the case the evolution from $P$ to $F$ will be a general quantum channel. Fortunately, we will show in Section~\ref{sec:pure_formalism} that formulating our definitions uniquely in terms of unitaries is not a restriction. Indeed, we will prove that Eq.~(\ref{eq:consistency}) for the action of $\mathcal{G}$ on unitaries uniquely specifies a pure process matrix~\cite{Araujo2017}, which can then be used to calculate the outcome probabilities for general quantum instruments. Said differently: if we want to extend $\mathcal{G}$ to a linear map on quantum instruments that agrees with Eq.~(\ref{eq:consistency}) for unitaries, there is a unique way to do so.

However, before we turn to proving the equivalence with the process matrix formalism, we give a few examples of processes that admit a description in terms of consistent causal reference frames.

\subsection{Example: causally ordered process}

A causally ordered bipartite process is one in which one of the parties cannot signal to the other. A process has the order $A \leq B$ if no matter his choice of local operation, $B$ cannot signal to $A$. In general, all pure bipartite processes with causal order $A \leq B$ are "channels with memory", of the form~\cite{Chiribella2009}
\begin{equation}
\Qcircuit @C=1em @R=1em @!R {
&\multigate{1}{V_1}& \qw  & \multigate{1}{V_2}& \qw & \multigate{1}{V_3}&\qw \\
& \ghost{V_1}& \gate{U_A} & \ghost{V_2} & \gate{U_B} & \ghost{V_3}&\qw & 
},
\end{equation}
 for some fixed unitaries $V_1, V_2, V_3$. We see directly that the above circuit can be used to represent both Alice's causal frame and Bob's causal frame. Therefore, for causally ordered processes it is possible to find a causal reference frame in which both $A$ and $B$ are both "localised in time": in the above we have that $B$ is localised in the future of $A$.
 
\subsection{Example: the quantum switch}

An interesting example of a physically relevant process that does not possess a well-defined causal order is the quantum switch~\cite{Chiribella2013, Araujo2015}. Nevertheless, we can choose any single observer, and decompose the process into a past and a future relative to his observer-event. Furthermore, it is possible to describe the past and future evolutions in a unitary way. The simplest version of the quantum switch is a bipartite process with $\mathrm{dim}( P) = \mathrm{dim}( F) = 4$ and $\mathrm{dim}( A) = \mathrm{dim}( B) = 2$. In circuit notation, we can write it according to Alice's causal reference frame as
\begin{equation}
\label{eq:switch_A}
\Qcircuit @C=.5em @R=0em @!R {
& \ctrl{1} &  \qw &\ctrlo{1} & \qw \\
&\gate{U_B} & \gate{U_A} & \gate{U_B}& \qw & , 
}
\end{equation}
while in Bob's causal reference frame it is
\begin{equation}
\label{eq:switch_B}
\Qcircuit @C=.5em @R=0em @!R {
& \ctrlo{1} &  \qw &\ctrl{1} & \qw \\
&\gate{U_A} & \gate{U_B} & \gate{U_A}& \qw & .
}
\end{equation}
In the above circuits, the upper qubit is the control-qubit, denoted by $C$, and the lower qubit is the ``target'' qubit which we denote by $S$. A black circle means control on the state $|1 \rangle_C$, while a white circle is a control on the state $|0 \rangle_C$. It is straightforward to check that both circuits yield the same global evolution $\mathcal{G}(U_A, U_B)$ from $P$ to $F$. This example shows that the consistency condition can be satisfied by processes in which one of the parties is delocalised in time: here we have $\Pi_A(U_B) = |0 \rangle \langle 0|^C \otimes \id^S + |1\rangle \langle 1|^C \otimes U_B^S$ and $\Phi_A(U_B) = |0 \rangle \langle 0|_C \otimes U_B + |1 \rangle \langle 1|^C \otimes \id^S$.

A common argument (see the Supplementary information of Ref.~\cite{Maclean2017}) attempts to conclude that the quantum switch, as realised in quantum optics experiments is "not the real thing", in that it can be described with a spacetime diagram that involves two space-time points per party, rather than only one. However as discussed in Section~\ref{subsec:events}, space-time points do not have an priori physical meaning even in classical physics, and one should not expect them to fare better once quantum mechanics enters the picture. The time-delocalisation of a local operation in the quantum switch does not mean that the operation is performed multiple times; it is executed only once, but on a \textit{time-delocalised subsystem}, as argued by Oreshkov~\cite{Oreshkov2018}. Our approach with causal reference frames provides a means to describe any pure process as the observer-dependent time evolution of a quantum system; during this evolution the time-localisation of events is generally observer dependent as shown by Eqs.~(\ref{eq:switch_A},~\ref{eq:switch_B}) in the case of the quantum switch.

\section{Equivalence with the process matrix formalism}
\label{sec:pure_formalism}

In Definition~\ref{def:process_rel}, we have proposed a relational definition of "processes" as a set of causal reference frames that obey a consistency condition. In this section, we make an explicit connection between the already existing process matrix formalism~\cite{Oreshkov2012} and the newly developed language of causal reference frames. Namely, we show that pure processes~\cite{Araujo2017} are in one-to-one correspondence with consistent causal reference frames. This equivalence will also show that we were justified, in the previous section, in limiting our definitions to the unitary case. The notation for the process matrix formalism relies heavily on the channel-state duality, or Choi-Jamio\l kowski isomorphism, which is reviewed in Appendix~\ref{app:CJ}. In the following, we follow common usage in the literature, where the terms ``process'' and ``process matrix'' are used interchangeably (altough the latter could be seen as the mathematical representation of the former; this is analogous to the relation between the terms ``quantum state `` and ``density matrix'').

\subsection{Pure processes}

In the original paper by Oreshkov, Costa and Brukner~\cite{Oreshkov2012}, a process matrix is defined as a functional on quantum instruments, obeying the requirement that probabilities are well-defined for all possible operations of the parties, including operations that involve shared entangled ancillary systems (this last condition ensures that the process matrix is positive semidefinite). It can be more convenient to view process matrices as ``supermaps''~\cite{Chiribella2008} that takes the local quantum channels of the parties and sends them to a quantum channel from a past Hilbert space $P$ to a future Hilbert space $F$. General formalisms for higher-order transformations, which include process matrices as special cases are presented in Refs.~\cite{Perinotti2017, Kissinger2017,Bisio2018}, and it would be interesting to investigate whether an analogous theory of causal reference frames can be developed for these more general frameworks. 

\begin{figure}[t]
\includegraphics[width=6cm]{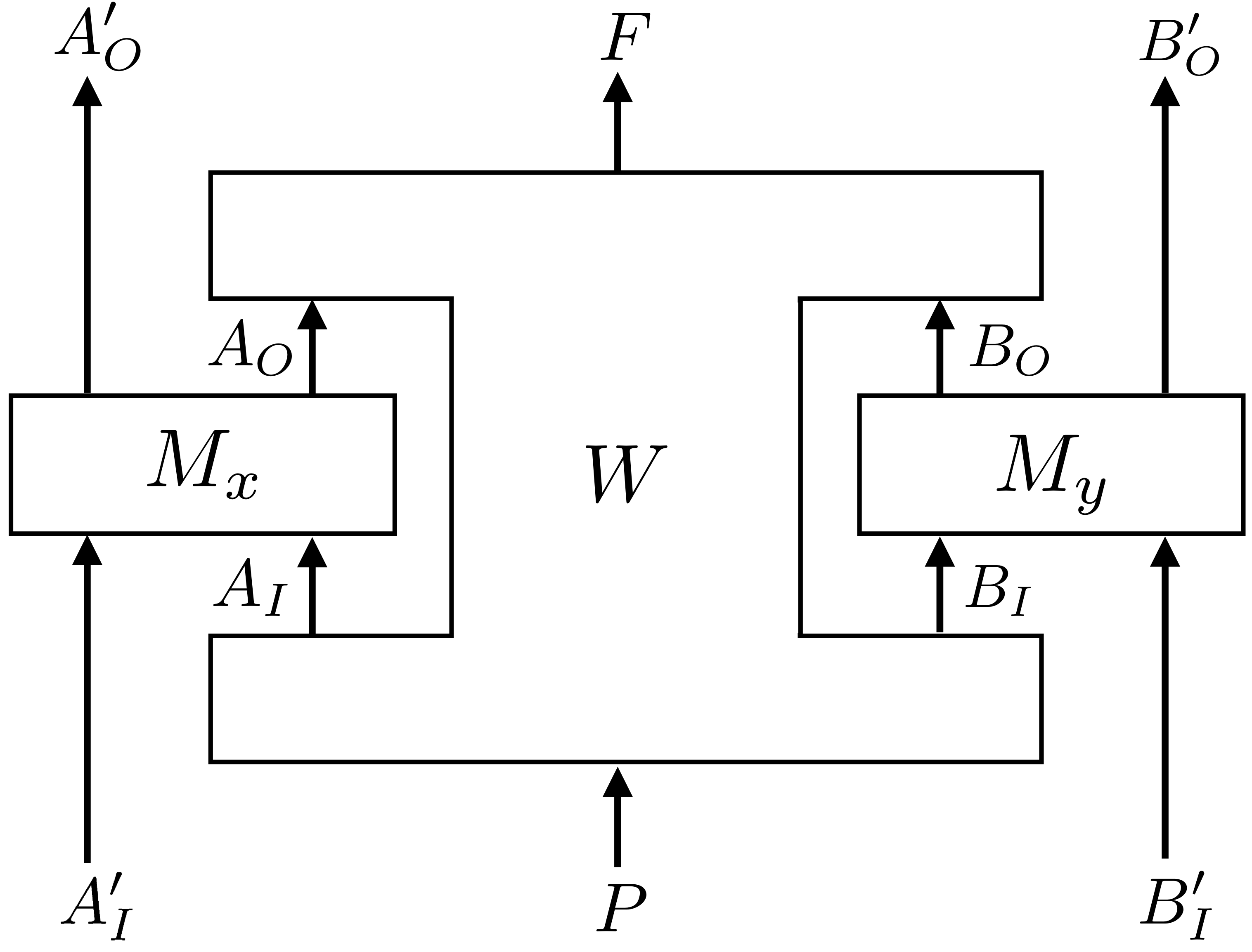}
\caption{Graphical representation of the relevant Hilbert spaces from Def.~\ref{def:process} of a bipartite process matrix.}
\centering
\label{fig:process}
\end{figure}

For the sake of simplicity, in what follows we consider only two parties, Alice and Bob, as shown in Figure~\ref{fig:process}. The extension of the definitions to more parties is straightforward, and all the results of this section continue to hold in the multipartite case. The localised laboratory of Alice has a finite dimensional input Hilbert space $A_I$ and output space $A_O$; similarly Bob has input $B_I$ and output $B_O$. We further allow the parties to have arbitrary ancillary Hilbert spaces $A_I', A_O', B_I', B_O'$, which are directly connected to the future (resp. past), as shown in Figure ~\ref{fig:process}. A quantum channel for Alice is a completely positive and trace-preserving (CPTP) map $\mathcal{M}: \mathcal{L}(A_I A_I') \to \mathcal{L} (A_O  A_O')$, where tensor products are implied so that $A_I A_I' = A_I \otimes A_I'$. Equivalently, the Choi state of a CPTP map (see the review in Appendix~\ref{app:CJ}) obeys $M^{A_I A_I' A_O A_O'} \geq 0$ and $\tr_{A_O A_O'} M^{A_I A_I' A_O A_O'} = \id^{A_I A_I'}$. We sometimes use superscripts to indicate the Hilbert spaces on which an operator acts.

We define process matrices as in Ref.~\cite{Araujo2017},~\footnote{However, our notation differs in that we use $A_I$ for Alice's input Hilbert space, while Ref.~\cite{Araujo2017} uses $A_I$ for the space of matrices acting on the input Hilbert space.}

\begin{definition}(Process matrix)
\label{def:process}
An operator $W^{P F A_I A_O B_I B_O} \in \mathcal{L}(P  F  A_I  A_O B_I B_O)$ is a process matrix if for all CPTP maps $\mathcal{M}_x: \mathcal{L}(A_I  A_I' )\to \mathcal{L}(A_O A_O')$, $\mathcal{M}_y: \mathcal{L}(B_I B_I' )\to \mathcal{L}(B_O B_O')$, where $A_I', A_O', B_I', B_O'$ are ancillary Hilbert spaces of arbitrary dimension, the operator
\begin{equation}
\label{eq:resulting_G}
G_{xy} = \tr_{A_I A_O B_I B_O} \left(W^{T_{A_I A_O B_I B_O}} (M_x^{A_I A_I' A_O A_O'} \otimes M_y^{B_I B_I' B_O B_O'}) \right)
\end{equation}
is the Choi state of a CPTP map from $P A_I' B_I'$ to $F A_O' B_O'$, i.e. $\tr_{F A_O' B_O'} G_{xy} = \id^{PA_I' B_I'}$. In the above, $W^{T_{A_I A_O B_I B_O}}$ is the partial transpose of $W$ on the $A_I, A_O, B_I, B_O$ Hilbert spaces, while $M_x$ and $M_y$ are the Choi operators corresponding to the CPTP maps $\mathcal{M}_x^{A_I A_I' A_O A_O'}$ and $\mathcal{M}_y^{B_I B_I' B_O B_O'}$.
\end{definition}

This view of processes as  a \textit{supermaps} $M_x \otimes M_y \mapsto G_{xy}$ allows one to define pure processes~\cite{Araujo2017}, of which we recall the definition. 
\begin{definition} (Pure process)
\label{def:pure_process}
A process matrix $W^{PFA_I A_O B_I B_O}$ is pure if, for all ancillary Hilbert spaces $A_I', A_O', B_I', B_O'$~\footnote{The dimensions of the primed Hilbert spaces must satisfy $d_{A_I} d_{A_I'} = d_{A_O} d_{A_O'}$, $d_{B_I} d_{B_I'} = d_{B_O} d_{B_O'}$, and $d_P d_{A_I'} d_{B_I'} = d_F d_{A_O'} d_{B_O'}$}, and all unitaries $U : A_I A_I' \to A_O A_O'$, $V: B_I B_I' \to B_O B_O'$, the resulting transformation
\begin{equation}
\label{eq:G_UV}
G_{UV} = \tr_{A_I A_O B_I B_O} \left(W^{T_{A_I A_O B_I B_O}} |U \rrangle \llangle U| \otimes |V\rrangle \llangle V| \right)
\end{equation}
is the Choi state of a unitary channel from $P A_I' B_I'$ to $F A_O' B_O'$.
\end{definition}

Purifiable processes are processes that can be obtained from some pure process after tracing out certain degrees of freedom. In contrast to the familiar situations in quantum information, where through the use of an ancillary Hilbert space, any mixed state can be purified and any quantum channel can be dilated to a unitary channel, there exists processes that cannot be purified~\cite{Araujo2017}. Purifiable processes have been argued to be more reasonable physically, because the irreversiblity that occurs within them can be interpreted as arising from forgetting degrees of freedom in a fundamentally reversible process. In this section we obtain another justification for the reasonableness of pure processes: those are precisely the processes that admit a description in terms of causal frames of reference.

We collect here an important characterisation of pure processes, whose proof is provided in Ref.~\cite{Araujo2017}.
\begin{theorem}
\label{thm:pure_rank1}
A process $W$ is pure if and only if $W = |U_w \rrangle \llangle U_w|$ for some unitary $U_w : P A_O  B_O \to F  A_I B_I$.
\end{theorem}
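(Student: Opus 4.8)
The plan is to prove the two implications separately, using throughout the Choi--Jamio\l kowski isomorphism reviewed in Appendix~\ref{app:CJ} together with one elementary fact: a completely positive trace-preserving (CPTP) map has a rank-one Choi operator if and only if it is an isometric channel, and an isometry between spaces of equal dimension is a unitary. The strategy is to read the supermap of Eq.~(\ref{eq:G_UV}) as a link product (a partial trace of a product) of Choi operators, and to exploit that link products of rank-one operators stay rank one.

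For the direction ($\Leftarrow$), suppose the process $W$ has the form $W = |U_w\rrangle\llangle U_w|$ with $U_w$ unitary. For unitary instruments $U,V$ the operator $G_{UV}$ is the link product of the three rank-one operators $W^{T_{A_IA_OB_IB_O}}$, $|U\rrangle\llangle U|$ and $|V\rrangle\llangle V|$ (note that transposition preserves rank), so $G_{UV}=|g\rangle\langle g|$ has rank at most one. Since $W$ is assumed to be a process matrix, Definition~\ref{def:process} applied to these CPTP unitary instruments (with the arbitrary ancillas $A_I',A_O',B_I',B_O'$ already built in) forces $\tr_{FA_O'B_O'} G_{UV} = \id^{PA_I'B_I'}$. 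A rank-one Choi operator that is trace-preserving is the Choi operator of an isometry, and the dimension constraints in the footnote to Definition~\ref{def:pure_process} make the input and output dimensions coincide, upgrading that isometry to a unitary. Hence $G_{UV}$ is a unitary Choi state for every $U,V$, so $W$ is pure.

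For the direction ($\Rightarrow$), the key idea is to pick the instruments so that the supermap simply returns $W$. I would take $A_I':=A_O$, $A_O':=A_I$, $B_I':=B_O$, $B_O':=B_I$ (one checks the dimension constraints hold) and let $U,V$ be the swap unitaries $A_I\otimes A_I'\to A_O\otimes A_O'$ and $B_I\otimes B_I'\to B_O\otimes B_O'$. Their Choi operators are (unnormalised) maximally entangled states, and contracting them against $W^{T_{A_IA_OB_IB_O}}$ in Eq.~(\ref{eq:G_UV}) implements the standard teleportation identity that relabels $A_I\to A_O'$, $A_O\to A_I'$, $B_I\to B_O'$, $B_O\to B_I'$ while leaving $P,F$ untouched. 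Thus $G_{UV}$ is nothing but $W$ rewritten on the spaces $P,F,A_I',A_O',B_I',B_O'$. Purity then asserts that this $G_{UV}$ is the Choi state of a unitary channel $PA_I'B_I'\to FA_O'B_O'$, i.e. $G_{UV}=|\tilde U\rrangle\llangle \tilde U|$; relabelling the primed spaces back yields $W=|U_w\rrangle\llangle U_w|$ with $U_w:PA_OB_O\to FA_IB_I$ unitary. Notably this single choice delivers both the rank-one structure and the unitarity at once.

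The main obstacle I anticipate is bookkeeping rather than conceptual: one must fix a single Choi convention and carefully track the transposes and complex conjugations introduced by the maximally entangled states, so that the relabelling in the ($\Rightarrow$) direction genuinely produces the Choi state of a \emph{unitary} and not of its transpose or conjugate (which is still unitary, hence harmless, but must be accounted for). The only other point needing care is the passage from ``rank-one and trace-preserving'' to ``unitary'' in the ($\Leftarrow$) direction, which is precisely where the dimension-matching constraints $d_Pd_{A_I'}d_{B_I'}=d_Fd_{A_O'}d_{B_O'}$ (and their Alice/Bob analogues) are invoked to turn an isometry into a unitary.
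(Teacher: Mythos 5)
Your proof is correct, and it is essentially the proof the paper itself relies on: the paper does not prove Theorem~\ref{thm:pure_rank1} but defers to Ref.~\cite{Araujo2017}, whose argument consists of exactly your two steps --- for the forward direction, the swap/teleportation choice of ancillas ($A_I'\cong A_O$, $A_O'\cong A_I$, $B_I'\cong B_O$, $B_O'\cong B_I$) so that the supermap returns $W$ itself up to relabelling, and for the converse, rank-one plus trace-preservation plus the dimension-matching constraint forcing the single Kraus operator to be unitary. One parenthetical slip worth fixing: the partial transpose $W^{T_{A_I A_O B_I B_O}}$ of a rank-one $W$ is generally \emph{not} rank one (partial transposition does not preserve rank); what your argument actually needs, and what your link-product lemma correctly supplies, is that the contraction of the rank-one operators $W$, $|U\rrangle\llangle U|$ and $|V\rrangle\llangle V|$ is rank at most one, i.e. $G_{UV}=|\mathcal{G}(U,V)\rrangle\llangle\mathcal{G}(U,V)|$ with $|\mathcal{G}(U,V)\rrangle$ as in Eq.~(\ref{eq:G_UV_ket}).
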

We stress that the above theorem does not mean that all unitaries $U : P A_O B_O \to F A_I  B_I$ are such that $|U \rrangle \llangle U |$ is a process.

Theorem~\ref{thm:pure_rank1} allows to simplify the expression for $G_{UV}$ in Eq.~(\ref{eq:G_UV}). Let $W = |w \rangle \langle w|$ be a pure process, and define
\begin{equation}
\label{eq:G_UV_ket}
|\mathcal{G}(U,V)\rrangle^{P A_I' B_I', F A_O' B_O'} := |w \rangle^{T_{A_I A_O B_I B_O}} \cdot |U \rrangle^{A_I A_I' A_O A_O'} |V \rrangle^{B_I B_I' B_O B_O'},
\end{equation}
where $|w \rangle^{T_{A_I A_O B_I B_O}} : A_I A_O B_I B_O \to P F$ is the matrix obtained by partial transpose of $|w \rangle$. Then we have that
\begin{equation}
G_{UV} = |\mathcal{G}(U,V) \rrangle \llangle \mathcal{G}(U,V)|.
\end{equation}

We make a few comment about the dimensions of the Hilbert space. We first observe that no loss of generality occurs by restricting our attention to pure processes in which $d_{A_I} = d_{A_O}$, $d_{B_I} = d_{B_O}$ and $d_P = d_F$. Indeed, suppose $W^{P F A_I A_O B_I B_O}$ is pure a process for which the input-output dimensions do not match. We can just add new Hilbert spaces $A_I', A_O', B_I', B_O'$ to make the dimension match. We define
\begin{equation}
\tilde{W} = W^{PF A_I A_O B_I B_O} \otimes |\id \rrangle \llangle \id|^{P_A A_I'} \otimes |\id \rrangle \llangle \id|^{P_B B_I'} \otimes |\id \rrangle \llangle \id|^{A_O' F_A} \otimes |\id \rrangle \llangle \id|^{B_O' F_B},
\end{equation}
where $P_A \cong A_I'$ , $P_B \cong B_I'$ , $F_A \cong A_O'$ , $F_B \cong B_O'$. The new process $\tilde{W}$ is pure and acts on the Hilbert spaces $\tilde{P} = P P_A  P_B$ , $\tilde{F} = F F_A  F_B$ , $\tilde{A}_I = A_I  A_I'$ , $\tilde{A}_O = A_O  A_O'$ , $\tilde{B}_I = B_I  B_I'$ , $\tilde{B}_O = B_O B_O'$, where now the input and output Hilbert spaces have the same dimension. We can recover $W$ from $\tilde{W}$ by tracing out over the primed Hilbert spaces. A second observations is that Theorem~\ref{thm:cheung} implies that the dimensions $d_{A}:= d_{A_I} = d_{A_O}$ and $d_{B}:=d_{B_I} = d_{B_O}$ must be divisors of $d_P = d_F$ in order for a process to be pure.

\begin{definition}(The induced map of a pure process)
\label{def:induced_map}
Let $W$ be a pure process with $d_{A_I} = d_{A_O} $ , $d_{B_I} = d_{B_O}$, $d_P = d_F$. The induced map $\mathcal{G}$ is the bilinear map that sends pairs of unitaries $U: A_I \to A_O$ to $V: B_I \to B_O$ to a unitary $\mathcal{G}(U,V): P \to F$, defined by
\begin{equation}
|\mathcal{G}(U,V)\rrangle^{PF} := |w \rangle^{T_{A_I A_O B_I B_O}} \cdot |U \rrangle^{A_I A_O} |V \rrangle^{B_I B_O},
\end{equation}
\end{definition}

Processes where parties have the same input and output Hilbert space dimension are fully determined by their action on unitaries:
\begin{proposition}
\label{prop:G_enough}
Let $W= |w \rangle \langle w|$ be a pure bipartite process with $d_{A_I} = d_{A_O}=: d_A$, $d_{B_I} = d_{B_O}=: d_B$, $d_P = d_F$, and let $\mathcal{G}$ be it's induced map as in Def.~\ref{def:induced_map}. Let $\{U_i \}_{i=1}^{d_{A}^2}$, $\{V_j \}_{j=1}^{d_{B}^2}$ be orthonormal bases of unitaries (see Appendix~\ref{app:CJ}) for Alice's and Bob's Hilbert space, respectively. Then
\begin{equation}
|w \rangle = \frac{1}{d_A d_B}\sum_{i,j} |\mathcal{G}(U_i,V_j)\rrangle^{PF} |U_i^* \rrangle^{A_I A_O} |V_j^* \rrangle^{B_I B_O}.
\end{equation}
\end{proposition}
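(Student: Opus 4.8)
The plan is to reconstruct $|w\rangle$ from the values $|\mathcal{G}(U_i,V_j)\rrangle$ by exploiting the fact that the induced map, when viewed through the Choi isomorphism, is obtained by contracting $|w\rangle$ against $|U\rrangle|V\rrangle$, and then inverting this contraction using the orthonormality of the chosen bases of unitaries. Concretely, I would start from Definition~\ref{def:induced_map}, namely
\begin{equation}
|\mathcal{G}(U_i,V_j)\rrangle^{PF} = |w \rangle^{T_{A_I A_O B_I B_O}} \cdot |U_i \rrangle^{A_I A_O} |V_j \rrangle^{B_I B_O},
\end{equation}
and think of $|w\rangle^{T_{A_I A_O B_I B_O}}$ as a linear map from $A_I A_O B_I B_O$ into $PF$. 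The statement to be proved is then exactly the claim that the right-hand side of the Proposition, upon applying the partial transpose back, returns $|w\rangle$; equivalently, it is a resolution-of-identity (completeness) argument on the space of operators acting on $A_I A_O B_I B_O$.

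The key step is the orthonormality relation for the bases of unitaries. As recalled in Appendix~\ref{app:CJ}, an orthonormal basis of unitaries $\{U_i\}$ for $\mathcal{L}(A_I,A_O)$ satisfies $\langle\langle U_i | U_k \rrangle = \tr(U_i^\dagger U_k) = d_A\,\delta_{ik}$, and the corresponding Choi vectors furnish a complete basis, so that
\begin{equation}
\frac{1}{d_A}\sum_i |U_i\rrangle\langle\langle U_i| = \id^{A_I A_O},
\end{equation}
with the analogous relation for $\{V_j\}$ on $B_I B_O$. I would substitute the definition of $|\mathcal{G}(U_i,V_j)\rrangle$ into the claimed expression, so that the sum becomes
\begin{equation}
\frac{1}{d_A d_B}\sum_{i,j} \Big(|w \rangle^{T} \cdot |U_i \rrangle |V_j \rrangle\Big)\, |U_i^* \rrangle |V_j^* \rrangle,
\end{equation}
and then recognise that $|U_i^*\rrangle$ contracted against $\langle\langle U_i|$ implements the partial transpose: the identity $|U^*\rrangle^{A_I A_O}\langle\langle U|^{A_I A_O}$ summed over the basis reproduces the transposition (swap) map on $\mathcal{L}(A_I A_O)$, up to the factor $d_A$. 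Carrying out the two completeness sums collapses the expression to $|w\rangle^{T}$ acted on by the double transpose, which returns $|w\rangle$, as desired.

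The main obstacle is bookkeeping of the transposes and complex conjugations: one must track carefully how $|U_i^*\rrangle$ pairs with $|U_i\rrangle$ (rather than with $\langle\langle U_i|$), and verify that the combination $\sum_i |U_i\rrangle\langle\langle U_i^*|$ reproduces exactly the partial transpose $T_{A_I A_O}$ that inverts the one already present in the definition of $|w\rangle^{T}$. The cleanest route is to fix index notation for the Choi vectors, write $|U_i\rrangle = \sum_{mn}(U_i)_{nm}|m\rangle_{A_I}|n\rangle_{A_O}$, and use $\frac{1}{d_A}\sum_i (U_i)_{nm}(U_i^*)_{n'm'} = \delta_{nn'}\delta_{mm'}$, which follows from orthonormality; this reduces the whole manipulation to a delta-function contraction that leaves $|w\rangle$ untouched. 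No positivity or purity beyond Definition~\ref{def:induced_map} is needed here, so the argument is purely linear-algebraic and the only care required is in the conventions for transpose and conjugation.
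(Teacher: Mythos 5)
Your proof is correct and is essentially the paper's own argument run in the opposite direction: the paper expands $|w\rangle$ in the basis $\{|U_i^*\rrangle|V_j^*\rrangle\}$ with unknown coefficient vectors and identifies them as $\frac{1}{d_A d_B}|\mathcal{G}(U_i,V_j)\rrangle$ via the orthonormality relation $\llangle U_i|U_k\rrangle = d_A\delta_{ik}$, while you substitute the definition of $\mathcal{G}$ into the right-hand side and collapse the sum with the completeness relation — the same linear-algebraic content. One caveat: your intermediate remark that $\sum_i |U_i^*\rrangle\llangle U_i|$ reproduces the swap/transposition map is not actually true for a general orthonormal basis of unitaries (e.g.\ replacing $Y$ by $e^{i\theta}Y$ in the Pauli basis changes this operator while preserving orthonormality), but this aside is harmless because the index identity you ultimately invoke, $\frac{1}{d_A}\sum_i (U_i)_{nm}(U_i^*)_{n'm'} = \delta_{nn'}\delta_{mm'}$, is the correct completeness statement and is all the argument actually needs.
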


\begin{proof}
Since $|w \rangle$ is a pure state, and that $|U_i^* \rrangle  \otimes |V_j^*\rrangle$ forms a basis for $A_I \otimes A_O \otimes B_I \otimes B_O$, $|w\rangle$ can be expanded as
\begin{equation}
|w\rangle = \sum_{i,j} |\psi_{i,j}\rangle^{PF} |U_i^*\rrangle^{A_I A_O} |V_j^* \rrangle^{B_I B_O},
\end{equation}
where $|\psi_{i,j} \rangle$ are vectors that we must determine. Eq.~(\ref{eq:G_UV_ket}) together with Eq.~(\ref{eq:transp_pure_choi}) yields
\begin{equation}
|\mathcal{G}(U_i,V_j)\rrangle^{PF}  =  |w\rangle^{T_{A_I A_O B_I B_O}} |U_i \rrangle^{A_I A_O}  |V_j \rrangle^{B_I B_O} =d_A d_B |\psi_{i,j} \rangle^{PF}.
\end{equation} \qed
\end{proof}

\begin{theorem} 
\label{thm:frames_implies_pure}
Every pair of compatible causal reference frame as in Def.~\ref{def:process_rel}, $\mathcal{G}(U_A,U_B) = \Phi_A(U_B) (U_A \otimes \id^{E_A} ) \Pi_A(U_B) = \Phi_B(U_A) (U_B \otimes \id^{E_B} ) \Pi_B(U_A)$, defines a valid pure process
\begin{equation}
|w \rangle = \frac{1}{d_A d_B} \sum_{i,j} |\mathcal{G}(U_i,V_j)\rrangle^{PF} |U_i^*\rrangle^{A_I A_O} |V_j^*\rrangle^{B_I B_O},
\end{equation}
where $\{U_i : A_I \to A_O\}_{i = 1}^{d_A^2}$ and $\{V_j: B_I \to B_O \}_{j=1}^{d_B^2}$ are a basis of orthonormal unitaries.
\end{theorem}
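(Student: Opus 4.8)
The plan is to verify directly that the operator $W = |w\rangle\langle w|$ produced by the formula satisfies Definition~\ref{def:pure_process}, by computing its induced map and showing that it reproduces the given $\mathcal{G}$. Since, by Proposition~\ref{prop:G_enough}, the stated formula is exactly the one that reconstructs a pure process from its induced map, the natural first move is to run that computation in reverse: inserting the proposed $|w\rangle$ into Definition~\ref{def:induced_map} and using the Choi--Jamio\l kowski identities of Appendix~\ref{app:CJ} together with the orthonormality of the unitary bases $\{U_i\},\{V_j\}$, one recovers $\mathcal{G}(U_i,V_j)$ on every basis pair. Writing $\tilde{\mathcal{G}}$ for the induced map of this $|w\rangle$, note that $\tilde{\mathcal{G}}$ is bilinear \emph{by construction}, so its values on the basis pairs determine it on all of $\mathcal{L}(A_I,A_O)\times\mathcal{L}(B_I,B_O)$.

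The crux is to show that this bilinear $\tilde{\mathcal{G}}$ agrees with the originally given $\mathcal{G}$ — which is a priori defined only on pairs of unitaries and is \emph{not} assumed bilinear — on every pair of unitaries, and is therefore unitary-valued. Here the consistency condition of Eq.~(\ref{eq:consistency}) does the essential work. Expanding an arbitrary unitary $U_A=\sum_i c_i U_i$ and using \emph{Alice's} expression $\mathcal{G}(U_A,U_B)=\Phi_A(U_B)(U_A\otimes\id^{E_A})\Pi_A(U_B)$ — in which $U_A$ enters linearly while $\Phi_A(U_B),\Pi_A(U_B)$ do not depend on it — gives $\mathcal{G}(U_A,U_B)=\sum_i c_i\,\mathcal{G}(U_i,U_B)$. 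Expanding $U_B=\sum_j d_j V_j$ and using \emph{Bob's} expression $\mathcal{G}(U_i,U_B)=\Phi_B(U_i)(U_B\otimes\id^{E_B})\Pi_B(U_i)$ in the same way gives $\mathcal{G}(U_i,U_B)=\sum_j d_j\,\mathcal{G}(U_i,V_j)$. Hence $\mathcal{G}(U_A,U_B)=\sum_{i,j}c_i d_j\,\mathcal{G}(U_i,V_j)=\tilde{\mathcal{G}}(U_A,U_B)$ is genuinely bilinear and unitary for all unitary $U_A,U_B$. The point to emphasise is that although the frame functions themselves are not linear, consistency forces the composite $\mathcal{G}$ to be linear in each slot through the \emph{opposite} party's decomposition, and the two separate linearities combine into joint bilinearity.

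It then remains to check the two requirements of Definition~\ref{def:pure_process}: that $|w\rangle$ reshapes to a unitary $U_w:P A_O B_O\to F A_I B_I$, so that $W=|U_w\rrangle\llangle U_w|$ as in Theorem~\ref{thm:pure_rank1}, and that the resulting $G_{UV}$ is the Choi state of a unitary even when ancillary systems $A_I',A_O',B_I',B_O'$ are present. The first is routine: evaluating the appropriate partial traces of $|w\rangle\langle w|$ and using that each $\mathcal{G}(U_i,V_j)$ is unitary together with $\tr(U_i^\dagger U_k)=d_A\delta_{ik}$ (and the analogue for $\{V_j\}$) shows $U_w^\dagger U_w=\id$ and $U_w U_w^\dagger=\id$. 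I expect the main obstacle to be the second, with-ancilla requirement, since the causal-frame data only directly constrains the no-ancilla action $\mathcal{G}$. I would overcome this by exploiting the sandwiched structure itself: with the ancillas threaded directly to $P,F$ as in Figure~\ref{fig:process}, feeding $U:A_I A_I'\to A_O A_O'$ into the frame decomposition expresses the global evolution as a composition $\phi_A\circ(U\otimes\id^{E_A})\circ\pi_A$ of unitaries on the enlarged space, which is manifestly unitary; matching this against the $G_{UV}$ computed from $|w\rangle=|U_w\rrangle$ then closes the argument.
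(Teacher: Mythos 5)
Your bilinearity argument is correct, and it is essentially the easy half of what the paper does: consistency lets you expand $U_A$ through Alice's frame and $U_B$ through Bob's frame, so $\mathcal{G}$ agrees with its bilinear extension $\tilde{\mathcal{G}}$ on all unitary pairs, and the \emph{ancilla-free} induced map is unitary-valued. But the real content of the theorem is the step you defer to the end, and your proposed treatment of it does not work. Definition~\ref{def:pure_process} demands that $G_{UV}$ be a unitary channel when \emph{both} parties act jointly on their systems and ancillas. Your idea of ``feeding $U:A_IA_I'\to A_OA_O'$ into the frame decomposition'' succeeds only when the \emph{other} party has no ancilla: writing $M=\sum_i U_i\otimes a_i$ (with non-unitary coefficients $a_i$), one gets $\mathcal{K}(M,V)=(\Phi_A(V)\otimes\id^{A_O'})\left(\sum_i U_i\otimes\id^{E_A}\otimes a_i\right)(\Pi_A(V)\otimes\id^{A_I'})$, which is unitary because $\Phi_A(V),\Pi_A(V)$ are independent of $i$ and can be pulled out of the sum. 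But when Bob also has an ancilla, $N=\sum_j V_j\otimes b_j$, you face $\mathcal{K}(M,N)=\sum_j f_M(V_j)\otimes b_j$ with $f_M(V)=(\Phi_A(V)\otimes\id)\left(\sum_i U_i\otimes\id^{E_A}\otimes a_i\right)(\Pi_A(V)\otimes\id)$: the $V$-dependence now sits in the \emph{outer} factors, so the sum over $j$ cannot be recombined into a unitary sandwich. This is precisely where the paper needs its generalised Marcus theorem (Theorem~\ref{thm:gen_marcus}): since $f_M$ is linear in $V$ (by switching to Bob's frame) and unitarity-preserving, it must have the structural form $f_M(V)=S_M(V\otimes\id)T_M$ or $f_M(V)=S_M(V^T\otimes\id)T_M$, and only with the $V$-dependence isolated in the middle can one resum over $j$. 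Moreover, the transpose alternative is not a harmless technicality: it would make $\mathcal{K}(M,N)$ a partial transpose of a unitary, which is generically \emph{not} unitary, and excluding it requires the paper's continuity/connectedness argument (deform $M$ to the identity and observe that the unitary $U_M$ of Eq.~(\ref{eq:U_M}) must remain in the identity component). Both ingredients are missing from your proposal, so the proof does not close.

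A secondary problem is your use of Theorem~\ref{thm:pure_rank1}. That theorem characterises purity \emph{among processes}; the paper explicitly stresses that not every unitary $U:PA_OB_O\to FA_IB_I$ gives a valid process $|U\rrangle\llangle U|$. So even if you showed the reshaped $U_w$ to be unitary, you could not invoke Theorem~\ref{thm:pure_rank1} to conclude that $W$ is a pure process without first establishing the process property (CPTP maps, with ancillas, are sent to CPTP maps) --- which is again the hard step above. I would also not call the unitarity check for $U_w$ ``routine'': the cross terms $\mathcal{G}(U_i,V_j)^{\dagger}\mathcal{G}(U_k,V_l)$ appearing in $U_w^{\dagger}U_w$ are not eliminated by orthonormality of the bases alone, so this direct computation is not obviously easier than the theorem itself.
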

\begin{proof}
Let $|w\rangle$ be as above, let $A_I' \cong A_O'$, $B_I' \cong B_O'$ be ancillary Hilbert spaces of any dimension, and let $M: A_I A_I' \to A_O A_O'$ and $N: B_I B_I' \to B_O B_O'$ be unitaries. These can be expanded in a basis as
\begin{align}
\label{eq:M_basis}
M = \sum_i  U_i \otimes a_i\\
\label{eq:N_basis}
N = \sum_i V_j \otimes b_j
\end{align}
where $\{U_i : A_I \to A_O\}_{i = 1}^{d_A^2}$ and $\{V_j: B_I \to B_O \}_{j=1}^{d_B^2}$ are a basis of orthonormal unitaries, and $a_i: A_I' \to A_O', b_j: B_I' \to B_O'$ are linear maps, not necessarily unitary. The transformation induced by $M, N$ is
\begin{align}
|\mathcal{K}(M,N)\rrangle^{P A_I' B_I' F A_O' B_O'} := & \, |w\rangle^{T_{A_I A_O B_I B_O}} |M \rrangle^{A_I A_I' A_O A_O'} |N \rrangle^{B_I B_I' B_O B_O'}\\
= & \sum_{i,j} |\mathcal{G}(U_i,V_j) \rrangle^{PF} |a_i\rrangle^{A_I' A_O'} |b_j\rrangle^{B_I' B_O'}
\end{align}
and showing that the process is valid and pure is equivalent to showing that
\begin{equation}
\label{eq:K_MN}
\mathcal{K}(M,N) = \sum_{i,j} \mathcal{G}(U_i,V_j) \otimes a_i \otimes b_j
\end{equation}
is a unitary from $P A_I'  B_I'$ to $F A_O' B_O'$.

We first write $\mathcal{G}(U_i,V_j)$ in Alice's causal reference frame
\begin{align}
\mathcal{K}(M,N) &= \sum_{i,j} \left( \Phi_A(V_j) (U_i \otimes \id^{E_A}) \Pi_A(V_j)\right) \otimes a_i \otimes b_j\\
&= \sum_j f_M(V_j) \otimes b_j,
\end{align}
where $P = A \otimes E_A$, and in the last line we defined $f_M(V) : P A_I' \to F A_O'$ as 
\begin{equation}
f_M(V) =\sum_i  \bigg(\Phi_A(V) (U_i \otimes \id^{E_A}) \Pi_A(V) \bigg) \otimes a_i =\left(\Phi_A(V) \otimes \id^{A_O'}\right)  \left(\sum_i U_i \otimes \id^{E_A} \otimes a_i\right)\left( \Pi_A(V)\otimes \id^{A_I'} \right).
\end{equation}
The second equality above, together with Eq.~(\ref{eq:M_basis}) makes it clear that $f_M(V)$ is unitary whenever $V$ is unitary, because it is a product of three unitaries. Notice also that $f_M(V)$ is a linear function both in $M$ and in $V$, so it is continuous in those two variables. Linearity in $V$ is proven by switching to Bob's causal frame:
\begin{equation}
\label{eq:f_M_Bob}
f_M(V) = \sum_i  \left(\Phi_B(U_i) (V \otimes \id^{E_B}) \Pi_B(U_i) \right)\otimes a_i.
\end{equation}

Therefore $f_M(V)$ satisfies the conditions of Theorem~\ref{thm:cheung}, and we conclude that 
\begin{equation}
\label{eq:f_M_new}
f_M(V) = S_M ((V \otimes \id_r) \oplus  (V^T \otimes \id_s)T_M,
\end{equation}
where $r + s =  d_{A'}d_P$ and where $T_M: P  A_I' \to A_I A_I'  E_A$ and $S_M: A_O A_I' E_A \to F  A_O'$ are some unitaries that depend on $M$. We wish to show that Eq.~\eqref{eq:f_M_new} holds without any transposes. To do so, we switch to the CJ representation and write
\begin{equation}
|f_M(V) \rrangle = Q_M |V \rrangle,
\end{equation}
where $Q_M: B_I B_O \to P A_I' F A_O'$ is a linear map whose properties are studied in Appendix~\ref{app:marcus_choi}. There it is shown that $\rho_M := \tr_{P A_I' B_I} |Q_M \rrangle \llangle Q_M|$ has $d_B^2$ times the eigenvalue $r$ and $d_B^2 s$ times the eigenvalue $1$.

Furthermore, note that the map that sends $M$ to the eigenvalues of $\rho_M$ is continuous (the map from $M$ to $f_M$ is continuous because it is linear, and the definition of $\rho_M$ from $f_M$ in Appendix~\ref{app:marcus_choi} only uses continuous functions). Since the set of allowed eigenvalues for $\rho_M$ is discrete (see Appendix~\ref{app:marcus_choi}), if two matrices $M_1$ and $M_2$ are connected by a continuous path in the space of matrices, then the eigenvalues of $\rho_{M_1}$ and $\rho_{M_2}$ will be equal. Since every unitary $M$ can be reached by a continuous path in the space of unitaries starting at the identity, $\rho_\mathcal{I}$ has the same eigenvalues as $\rho_M$. Finally, because the dimensions $r$ and $s$ are directly related to the eigenvalues of $\rho_M$, this implies that $r$ and $s$ in Eq.~\ref{eq:f_M_new} are constant for all $M$.

Therefore it suffices to verify that there are no transposes in Eq.~\eqref{eq:f_M_new} in the case where $M$ is the identity map $\mathcal{I}^{A_I \to A_O} \otimes \mathcal{I}^{A_I' \to A_O'}: |i \rangle^{A_I} |j \rangle^{A_I'} \mapsto |i \rangle^{A_O} |j \rangle^{A_O'}$. We get from Eq.~(\ref{eq:f_M_Bob}) that
\begin{equation}
f_{\mathcal{I}} (V) = \bigg(\Phi_B(\mathcal{I}) (V \otimes \id^{E_B}) \Pi_B(\mathcal{I})\bigg) \otimes \mathcal{I}^{A_I' \to A_O'},
\end{equation}
which shows that $f_{\mathcal{I}}$ takes the form of Eq.~(\ref{eq:f_M_new}), without transpose, i.e. $r = d_{E_A}$ and $s= 0$. Thus we have
\begin{align}
\mathcal{K}(M, N) &= \sum_j \left( S_M (V_j \otimes \id^{E_B A_I'}) T_M \right) \otimes b_j \\
& = (S_M\otimes \id^{B_O'}) \left(\sum_j V_j \otimes \id^{E_B} \otimes b_j \right) (T_M \otimes \id^{B_I'}),
\end{align}
and Eq.~(\ref{eq:N_basis}) shows that $\mathcal{K}(M,N)$ is unitary, since it is the product of three unitaries.
\qed
\end{proof}

The proof above can be generalised recursively to any number of parties, as we sketch here. Assume that every $N-1$ partite compatible causal reference frames defines a valid pure process and let $\mathcal{G}(U_{1}, ... , U_{N})$ be compatible reference frames for $N$ parties. Then we have to show that $\mathcal{K}(M_1, ... , M_N)$ defined as the obvious generalisation of Eq.~(\ref{eq:K_MN}) is unitary, where as before, $M_i = \sum_k U_i^k \otimes a_i^k$, for $i \in \{1, ... N\}$. Defining
\begin{equation}
\label{eq:J_M}
\mathcal{J}_{M_1,... ,M_{N-1}}(U_N) = \sum_{i_1, ..., i_{N-1}} \mathcal{G}(U_1^{i_1}, ... , U_{N-1}^{i_{N-1}}, U_{N}) \otimes a_1^{i_1} \otimes ... \otimes a_{N-1}^{i_{N-1}},
\end{equation}
we get
\begin{equation}
\label{eq:K_N_parties}
\mathcal{K}(M_1, ... , M_N) = \sum_k \mathcal{J}_{M_1,...,M_{N-1}}(U_{N}^k) \otimes a_{N}^k.
\end{equation}
Now by the recursion assumption, $\mathcal{J}_{M_1,... ,M_{N-1}}(U_N)$ is unitary, and it is linear in $U_N$ as can be seen by using $A_N$'s causal reference frame decomposition for $\mathcal{G}$ in the Eq.~(\ref{eq:J_M}). Therefore (using as before Theorem~\cite{Cheung2003} and a continuity argument to get rid of the potential transposes), there exists unitaries $S_{M_1,...,M_{N-1}},T_{M_1,...,M_{N-1}}$ which depend on $M_1, .... M_{N-1}$, and such that
\begin{equation}
\mathcal{J}_{M_1,...,M_{N-1}}(U_N) = S_{M_1,...,M_{N-1}}(U_N \otimes \id )T_{M_1,...,M_{N-1}}.
\end{equation}
Plugging this into Eq.~(\ref{eq:K_N_parties}) shows that $\mathcal{K}(M_1, ... , M_N)$ is unitary, which completes the proof by recursion.

We now provide an expression for pure processes that makes manifest the existence of the causal reference frame decomposition for one of the parties.

\begin{theorem}
\label{thm:w_A_frame}
The process vector $|w\rangle$ corresponding to a pair of consistent causal reference frames as in Def.~\ref{def:process_rel}, $\mathcal{G}(U_A,U_B) = \Phi_A(U_B) (U_A \otimes \id^{E_A} ) \Pi_A(U_B) = \Phi_B(U_A) (U_B \otimes \id^{E_B} ) \Pi_B(U_A)$ can be written such that the causal frame of one party (in the following, Alice's) appears explicitly as
\begin{equation}
|w \rangle = \frac{1}{d_B} \llangle \id|^{E_I E_O} \sum_j |\Pi_A(V_j) \rrangle^{P, A_I E_I} |\Phi_A(V_j)\rrangle^{A_O E_O, F} |V_j^* \rrangle^{B_I B_O},
\end{equation}
where $E_I, E_O$ are Hilbert spaces isomorphic to $E_A$, and where $\{V_j: B_I \to B_O \}_{j=1}^{d_B^2}$ is a basis of orthonormal unitaries.
\end{theorem}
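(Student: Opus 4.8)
The plan is to start from the closed-form expression for the process vector already established in Theorem~\ref{thm:frames_implies_pure} (equivalently Proposition~\ref{prop:G_enough}),
\begin{equation}
|w\rangle = \frac{1}{d_A d_B}\sum_{i,j}|\mathcal{G}(U_i,V_j)\rrangle^{PF}\,|U_i^*\rrangle^{A_I A_O}\,|V_j^*\rrangle^{B_I B_O},
\end{equation}
and to carry out the sum over Alice's basis index $i$ explicitly while leaving the sum over Bob's index $j$ untouched. The key observation is that the $U_i$-dependence of $\mathcal{G}$ enters only through Alice's frame decomposition $\mathcal{G}(U_i,V_j)=\Phi_A(V_j)(U_i\otimes\id^{E_A})\Pi_A(V_j)$, so after substituting this the index $i$ appears bilinearly --- once inside $\mathcal{G}$ and once in the explicit factor $|U_i^*\rrangle^{A_I A_O}$ --- and can therefore be eliminated by a resolution of the identity for the orthonormal unitary basis.

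First I would substitute the decomposition and use the Choi composition rule to write $|\mathcal{G}(U_i,V_j)\rrangle^{PF}$ with the $U_i$-dependence factored out: $\Phi_A(V_j)$ pushes onto the $F$ (and $E_O$) legs, $\Pi_A(V_j)$ onto the $P$ (and $E_I$) legs, $U_i$ sits on the $A_I A_O$ legs, and the two copies $E_I,E_O$ of the environment $E_A$ remain to be contracted. The heart of the argument is then the completeness relation for an orthonormal unitary basis, which in Choi form reads $\frac{1}{d_A}\sum_i |U_i\rrangle\llangle U_i|^{A_I A_O}=\id^{A_I A_O}$, or in components $\frac{1}{d_A}\sum_i (U_i)_{\alpha a}\overline{(U_i)_{\beta b}}=\delta_{\alpha\beta}\delta_{ab}$. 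Applied to $\sum_i |\mathcal{G}(U_i,V_j)\rrangle^{PF}|U_i^*\rrangle^{A_I A_O}$, this contracts the single $U_i$ inside $\mathcal{G}$ against the $U_i^*$ in the explicit Choi factor, removing the $A_I A_O$ summation and simultaneously gluing the two environment copies by $\llangle\id|^{E_I E_O}$. The result is
\begin{equation}
\sum_i |\mathcal{G}(U_i,V_j)\rrangle^{PF}|U_i^*\rrangle^{A_I A_O}=d_A\,\llangle\id|^{E_I E_O}\,|\Pi_A(V_j)\rrangle^{P,A_I E_I}\,|\Phi_A(V_j)\rrangle^{A_O E_O,F}.
\end{equation}

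Substituting this back and cancelling the factor $d_A$ against the $1/d_A$ in the prefactor leaves exactly $\frac{1}{d_B}\llangle\id|^{E_I E_O}\sum_j |\Pi_A(V_j)\rrangle^{P,A_I E_I}|\Phi_A(V_j)\rrangle^{A_O E_O,F}|V_j^*\rrangle^{B_I B_O}$, which is the claimed identity. I expect the main obstacle to lie entirely in the bookkeeping of the Choi--Jamio\l kowski conventions: one must track which spaces carry the partial transpose, verify that it is the single factor of $U_i$ inside $\mathcal{G}$ (and not its conjugate) that pairs with $|U_i^*\rrangle$ under the completeness relation, and check that the leftover inner legs of $\Pi_A(V_j)$ and $\Phi_A(V_j)$ are reconnected by $\llangle\id|$ rather than by some basis-dependent operator. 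Once the conventions are fixed --- a short index computation settles this --- the resummation is immediate. Reassuringly, no property of the frame functions beyond their being the $A$-frame decomposition of $\mathcal{G}$ is used, since consistency with Bob's frame was already consumed in establishing that $|w\rangle$ is a legitimate pure process in Theorem~\ref{thm:frames_implies_pure}.
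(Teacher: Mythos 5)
Your proposal is correct and follows essentially the same route as the paper: both start from the expansion of $|w\rangle$ given by Theorem~\ref{thm:frames_implies_pure}, substitute Alice's frame decomposition $\mathcal{G}(U_i,V_j)=\Phi_A(V_j)(U_i\otimes\id^{E_A})\Pi_A(V_j)$, and invoke the Choi product rule (Prop.~\ref{prop:mult_ket}) to rewrite $|\Phi_A(V_j)(U_i\otimes\id)\Pi_A(V_j)\rrangle^{PF}$ as $\left(\llangle U_i^*|^{A_I A_O}\otimes\llangle\id|^{E_I E_O}\right)|\Pi_A(V_j)\rrangle^{P,A_I E_I}|\Phi_A(V_j)\rrangle^{A_O E_O,F}$. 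The only difference is presentational: the paper verifies the identity by matching coefficients under projection onto $\llangle U_i^*|^{A_I A_O}$, whereas you resum over $i$ via the completeness relation for the orthonormal unitary basis --- two dual formulations of the same computation.
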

\begin{proof}
From Thm.~\ref{thm:frames_implies_pure}, we may write
\begin{align}
|w \rangle &= \frac{1}{d_A d_B} \sum_{i,j} |\mathcal{G}(U_i,V_j)\rrangle^{PF} |U_i^*\rrangle^{A_I A_O} |V_j^*\rrangle^{B_I B_O} \\
&= \frac{1}{d_A d_B} \sum_{i,j} |\Phi_A(V_j) (U_i \otimes \id ) \Pi_A(V_j) \rrangle^{PF} |U_i^*\rrangle^{A_I A_O} |V_j^*\rrangle^{B_I B_O}.
\end{align}
We prove the statement by "expanding" $|w \rangle$ in the $\{|U_i^*\rrangle^{A_I A_O} \}$ basis:
\begin{align}
\llangle U_i^*|^{A_I A_O} |w \rangle &=\frac{1}{d_B} \sum_j |\Phi_A(V_j) (U_i \otimes \id ) \Pi_A(V_j) \rrangle^{PF} |V_j^* \rrangle^{B_I B_O} \\
&= \left( \llangle U_i^*|^{A_I A_O} \otimes \llangle \id|^{E_I E_O} \right)\left(\frac{1}{d_B} \sum_j |\Phi_A(V_j) \rrangle^{A_O E_O, F} |\Pi_A(V_j) \rrangle^{P, A_I E_I} |V_j^*\rrangle^{B_I B_O} \right),
\end{align}
where in the second line we used Prop.~\ref{prop:mult_ket}, and where $E_I, E_O$ are two isomorphic copies of $E_A$. \qed
\end{proof}
This theorem can also be straightforwardly generalised to any number of parties.

Finally, we prove the converse of Thm.~\ref{thm:frames_implies_pure}, thus showing that pure processes are equivalent to causal frames of reference.
\begin{theorem}
\label{thm:pure_implies_frames}
If $W$ is a pure process with matching input and output dimensions $d_{A_I} = d_{A_O}$, $d_{B_I} = d_{B_O}$, $d_P = d_F$, then its induced map $\mathcal{G}$ admits a decomposition into causal frames as in Def.~\ref{def:process_rel}.
\end{theorem}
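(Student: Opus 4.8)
The plan is to run Thm.~\ref{thm:frames_implies_pure} in reverse: starting from the induced map $\mathcal{G}$ of the given pure process $W$, I would reconstruct the four frame functions $\Pi_A,\Phi_A,\Pi_B,\Phi_B$ directly from Marcus' Theorem~\ref{thm:gen_marcus}, and then observe that the consistency condition of Def.~\ref{def:process_rel} holds automatically by construction. By Thm.~\ref{thm:pure_rank1} we may write $W=|U_w\rrangle\llangle U_w|$, and by Def.~\ref{def:pure_process} the induced map $\mathcal{G}(U_A,U_B)$ is bilinear (Def.~\ref{def:induced_map}) and carries every pair of unitaries to a unitary $\mathcal{G}(U_A,U_B):P\to F$; recall also that $d_A$ and $d_B$ divide $d_P=d_F$, which is exactly the divisibility needed to apply Marcus' theorem.

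First I would fix an arbitrary unitary $U_B$ and regard $X\mapsto\mathcal{G}(X,U_B)$ as a single linear map from $\mathcal{L}(A_I,A_O)$ to $\mathcal{L}(P,F)$ that sends unitaries to unitaries. Marcus' Theorem~\ref{thm:gen_marcus} then supplies unitaries $S_{U_B}:A_O\otimes E_A\to F$ and $T_{U_B}:P\to A_I\otimes E_A$, with $\dim E_A=d_P/d_A$, such that either $\mathcal{G}(U_A,U_B)=S_{U_B}(U_A\otimes\id^{E_A})T_{U_B}$ or $\mathcal{G}(U_A,U_B)=S_{U_B}(U_A^{T}\otimes\id^{E_A})T_{U_B}$. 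Setting $\Phi_A(U_B):=S_{U_B}$ and $\Pi_A(U_B):=T_{U_B}$ (and extending both functions to non-unitary arguments in any way one likes, since only their values on unitaries enter Eq.~(\ref{eq:consistency})) produces a candidate frame function for Alice, and the symmetric construction with the roles of $A$ and $B$ exchanged produces $\Pi_B,\Phi_B$. Consistency is then immediate, since both decompositions are by their very definition equal to the single object $\mathcal{G}(U_A,U_B)$.

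The one genuine obstacle is to discard the transposed branch of Marcus' theorem, because a frame function must realise $\mathcal{G}$ in the untransposed form $\Phi_A(U_B)(U_A\otimes\id^{E_A})\Pi_A(U_B)$. Rather than import the continuity and connectedness argument used in Thm.~\ref{thm:frames_implies_pure}, I would argue pointwise in $U_B$ using purity together with an entangling ancilla. Suppose the transposed form held for some $U_B$; choosing ancillae $A_I'\cong A_O'$ and letting Alice apply an entangling unitary $M=\sum_i U_i\otimes a_i:A_IA_I'\to A_OA_O'$ while Bob applies $U_B\otimes\id$, the resulting transformation becomes $\bigl(S_{U_B}\otimes\id^{A_O'}\bigr)\bigl(M^{T_A}\otimes\id^{E_A}\bigr)\bigl(T_{U_B}\otimes\id^{A_I'}\bigr)$, where $M^{T_A}$ denotes the partial transpose of $M$ on Alice's factor. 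Since the partial transpose of a unitary need not be unitary — for instance the partial transpose of a swap operator is, up to normalisation, a rank-one projector — this transformation fails to be unitary, contradicting Def.~\ref{def:pure_process}. Hence the untransposed branch must hold for every $U_B$, and symmetrically for every $U_A$; this is the only nontrivial input the argument requires, and it is cleaner than the forward direction in that no continuity of $U_B\mapsto(S_{U_B},T_{U_B})$ is needed. With the transpose excluded, the construction above delivers a consistent pair of causal reference frames inducing $\mathcal{G}$, and the recursive extension to $N$ parties proceeds exactly as in the remark following Thm.~\ref{thm:frames_implies_pure}.
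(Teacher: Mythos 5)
Your proof is correct, and it shares the paper's skeleton: fix $U_B$, apply the generalised Marcus Theorem~\ref{thm:gen_marcus} to the linear unitarity-preserving map $U_A\mapsto\mathcal{G}(U_A,U_B)$, rule out the transposed branch, and observe that consistency of the resulting frames is automatic because both decompositions reproduce the single object $\mathcal{G}(U_A,U_B)$. Where you genuinely differ from the paper is in the mechanism that kills the transpose, which is the crux of this direction. The paper argues from process validity (Def.~\ref{def:process}): it writes out $|w\rangle$ in the transposed case (which inserts a $SWAP_{A_I A_O}$, via the steps of Thm.~\ref{thm:w_A_frame}), feeds Alice the non-unitary trace-and-prepare instrument $\id^{A_I}\otimes|\psi\rangle\langle\psi|^{A_O}$, and shows that the induced map $G$ has $\tr_F G$ with zero eigenvalues, contradicting $\tr_F G=\id^P$ --- in effect, the transpose would route Alice's output into the past. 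You argue instead from purity (Def.~\ref{def:pure_process}) with ancillas: linearity of $\mathcal{K}(\cdot,U_B)$ in the Choi representation turns the transposed branch into $\bigl(S_{U_B}\otimes\id\bigr)\bigl(M^{T_A}\otimes\id^{E_A}\bigr)\bigl(T_{U_B}\otimes\id\bigr)$, and choosing $M$ to be a swap, whose partial transpose is rank one, breaks unitarity. Both arguments are pointwise in $U_B$; note that the paper's proof of this direction also avoids continuity (the continuity/connectedness argument appears only in the converse, Thm.~\ref{thm:frames_implies_pure}), so your gain over the paper is not the absence of continuity but economy: you bypass the explicit reconstruction of the process vector and reuse the same $\mathcal{K}(M,N)$ machinery as the forward direction, making the two directions structurally parallel, at the cost of invoking the full strength of purity with ancillas where the paper only needs preservation of CPTP maps. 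Two cosmetic points: your type annotations $S_{U_B}:A_O\otimes E_A\to F$ and $T_{U_B}:P\to A_I\otimes E_A$ are those of the untransposed branch (in the transposed branch the intermediate spaces swap, as in Eq.~(\ref{eq:G_with_T})), and your contradiction requires $d_A\geq 2$ --- harmless, since for $d_A=1$ transposition is trivial and the two branches coincide.
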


\begin{proof}
Have all parties except one of them (here we take Alice w.l.o.g.) perform a fixed unitary; in the bipartite case there is only Bob performing the fixed unitary $U_B$, but the argument applies for any number of parties. Then $\mathcal{G}(\cdot, U_B)$ defines a linear function that maps the unitaries of Alice to unitaries from $P$ to $F$. Theorem~\ref{thm:cheung} applied to the map $U_A \mapsto \mathcal{G}(U_A, U_B)$ gives us that
\begin{equation}
\label{eq:G_new}
\mathcal{G}(U_A, U_B) = \Phi_A(U_B) \left((U_A \otimes \id) \oplus (U_A^T \otimes \id) \right) \Pi_A(U_B),
\end{equation}
where the Hilbert space $P$ decomposes as $P = (A^1_I \otimes C_I) \oplus (A^2_I \otimes D_I)$ where $A^1_I, A^2_I$ are copies of $A$ and $C,D$ are Hilbert spaces whose dimensions satisfy $(d_C + d_D) d_A = d_P$. Similarly $F = (A^1_O \otimes C_O) \oplus (A^2_O \otimes D_O).$

We show by way of contradiction that $d_D$ must be zero because otherwise the process would not send arbitrary CPTP maps to CPTP maps. Define the one-party process
\begin{equation}
\mathcal{G}'(U_A) = \Phi_A^\dagger(U_B) \mathcal{G}(U_A, U_B) \Pi_A(U_B)^\dagger = \left((U_A \otimes \id) \oplus (U_A^T \otimes \id) \right). 
\end{equation}
We have to show that this process maps CPTP maps to CPTP maps. Suppose we apply this process to the map $M^{A_I A_O} = \id^{A_I} \otimes |0 \rangle \langle 0|^{A_O}$. Let us define $K_i  : = |0\rangle^{A_O} \langle i|^{A_I}$ and notice that $M = \sum_i |K_i \rrangle \llangle K_i|$. We also have that $\mathcal{G'}(K_i) = |0 \rangle \langle i| \otimes \id_r \oplus |i \rangle \langle 0| \otimes \id_s$ and
\begin{equation}
|\mathcal{G'}(K_i) \rrangle = |0 \rangle^{A_O^1} |i \rangle^{A_I^1} | \id \rrangle^{C_I C_O}\oplus |i \rangle^{A_O^2} |0 \rangle^{A_I^2} \otimes |\id\rrangle^{D_I D_O}.
\end{equation}
Therefore the resulting map after contracting the process with M will be
\begin{equation}
N := \tr_{A_I A_O} \left(W^{T_{A_I A_O}} M^{A_I A_O} \right)= \sum_i |\mathcal{G}'(K_i) \rrangle \llangle \mathcal{G}'(K_i)|^{PF}.
\end{equation}
For the process to be valid it must be the case that $N$ is CPTP, i.e. $\tr_F = \id^P$. One may check that
\begin{equation}
\tr_F N = \id^{A_I^1} \id^{C_O} \oplus \, d_A |0 \rangle \langle 0|^{A_I^2} \otimes \id ^{D_O},
\end{equation}
which is CPTP if and only if the $D$ Hilbert space has zero-dimension i.e. the causal reference frame description
\begin{equation}
\mathcal{G}(U_A, U_B) = \Phi_A(U_B)( U_A \otimes \id ) \Pi(U_B)
\end{equation}
holds for Alice. Repeating for all other parties completes the proof.
\qed

\end{proof}

\section{The causal reference frames of causal inequality violating processes}
\label{sec:swiss}

In this section we investigate the causal reference frames description of some processes that can violate causal inequalities. An interesting pure tripartite process which is known to violate causal inequalities was already studied in Refs.~\cite{Baumeler2015, Araujo2017, Araujo2017_CTC}. Written as a process vector, it is equal to
\begin{equation}
|w \rangle = \sum_{\mathbf{x,y}}|\mathbf{y}\rangle^P |\mathbf{x}\rangle^{O} |\mathbf{y} \oplus f(\mathbf{x}) \rangle^{I} |\mathbf{x}\rangle^{F},
\end{equation}
where $I = A_I B_I C_I$, $O = A_O B_O C_O$, where we use bold-face notation for three-component binary vectors and where
\begin{equation}
f(a,b,c) = (0,0,0) + (1,0,0) \delta_{b,0} \delta_{c,1} + (0,1,0) \delta_{a,1} \delta_{c,0} + (0,0,1) \delta_{a,0} \delta_{b,1}.
\end{equation}
Alternatively, we can describe $|w\rangle$ via it's induced map $\mathcal{G}(U_A, U_B, U_C)$
\begin{align}
\label{eq:1st_G_swiss}
\mathcal{G}(U_A,U_B,U_C) (U_A^\dagger \otimes U_B^\dagger \otimes U_C^\dagger) |i i i \rangle &= |i i i \rangle \\
\mathcal{G}(U_A,U_B,U_C)(XU_A^\dagger \otimes U_B^\dagger \otimes U_C^\dagger)|i0 1 \rangle &= |i0 1 \rangle\\
\mathcal{G}(U_A,U_B,U_C)(U_A^\dagger \otimes XU_B^\dagger \otimes U_C^\dagger)|1 i 0 \rangle &= |1 i 0 \rangle\\
\label{eq:last_G_swiss}
\mathcal{G}(U_A,U_B,U_C) (U_A^\dagger \otimes U_B^\dagger \otimes XU_C^\dagger)|0 1 i \rangle &= |0 1 i \rangle,
\end{align}
where $i \in \{0,1\}$. When described from Alice's event-frame, it is
\begin{equation}
\Qcircuit @C=.5em @R=0em @!R {
& \qw & \targ & \gate{U_A} & \ctrl{1} & \ctrlo{1} & \qw \\
& \gate{U_B}& \ctrlo{-1} & \qw & \gate{U_B X U_B^\dagger} &\ctrl{1} & \qw \\
& \gate{U_C} & \ctrl{-1} & \qw&  \ctrlo{-1} & \gate{U_C X U_C^\dagger}& \qw
}
\label{eq:circuit_swiss}
\end{equation}
Here $X$ is the Pauli-X operator, and a white circle is a control by the $|0\rangle$ state. This process has the curious feature that the past $\Pi_A$ is linear in $U_B$ and $U_C$, but the future $\Phi_A$ still depends non-trivially on $U_B, U_C$. Interestingly, this process violates causal inequalities even under the restriction to classical instruments (diagonal in the computational basis)~\cite{Baumeler2015}.

We can obtain another valid process by taking the time reverse of $|w \rangle$, as explained in Appendix~\ref{app:time_rev}. The result is
\begin{align}
\label{eq:rev_swiss_ket}
|w _r \rangle& = \sum_{\mathbf{x,y}}|\mathbf{x}\rangle^P|\mathbf{y} \oplus f(\mathbf{x}) \rangle^{O}  |\mathbf{x}\rangle^{I} |\mathbf{y}\rangle^{F} \\
&= \sum_{\mathbf{x,y}}|\mathbf{x}\rangle^P|\mathbf{y} \rangle^{O}  |\mathbf{x}\rangle^{I} |\mathbf{y}  \oplus f(\mathbf{x})\rangle^{F}\\
&= \sum_{\mathbf{x,y}} |\mathbf{y}\rangle^P \ket{\mathbf{x}}^O \ket{\mathbf{y}}^I \ket{\mathbf{x} \oplus f(\mathbf{y})}^F,
\end{align}
where in the second line we made the change $\mathbf{y} \mapsto \mathbf{y} \oplus f(\mathbf{x})$ and in the third line we relabelled $\mathbf{x} \leftrightarrow \mathbf{y}$. Equivalently, this process can be described with its induced map as
\begin{align}
\mathcal{G}_r(U_A,U_B,U_C)|i i i \rangle &= (U_A \otimes U_B \otimes U_C) |i i i \rangle \\
\mathcal{G}_r(U_A,U_B,U_C)|i0 1 \rangle &= (X U_A \otimes U_B \otimes U_C) |i0 1 \rangle\\
\mathcal{G}_r(U_A,U_B,U_C)|1 i 0 \rangle &= (U_A \otimes  X U_B \otimes U_C)|1 i 0 \rangle\\
\mathcal{G}_r(U_A,U_B,U_C)|0 1 i \rangle &= (U_A \otimes U_B \otimes X U_C) |0 1 i \rangle,
\end{align}
where $i \in \{0,1\}$. At first sight it seems that the transformation $\mathcal{G}_r$ can be understood causally: the parties parallely apply $U_A, U_B, U_C$ on the input quantum state $|\psi\rangle^P$, and then a Pauli-X gate is applied to the state in a way that depends on the state in the past $|\psi \rangle$. Indeed, in classical theory, this process has a simple realisation: first copy the input state, then parallely apply the transformations $U_A \otimes U_B \otimes U_C$ on the original state, and finally apply a controlled gate from the copy to the target. Of course, this particular strategy is forbidden in quantum mechanics because of the no-cloning theorem. 

The causal reference frames description of $|w_r\rangle$ however tells a different story. When written in Alice's causal reference frame, the process is
\begin{equation}
\Qcircuit @C=.5em @R=0em @!R {
& \ctrlo{1} & \ctrl{1}  & \gate{U_A} &\targ & \qw & \qw\\
&\ctrl{1} &\gate{U_B^\dagger X U_B} & \qw & \ctrlo{-1}& \gate{U_B} & \qw\\
&\gate{U_C^\dagger X U_C} & \ctrlo{-1} & \qw & \ctrl{-1}& \gate{U_C} & \qw
}
\label{eq:circuit_rev_swiss}
\end{equation}
which has the same feature that was previously noticed for $|w\rangle$ (now it is the future $\Phi_{A}$ that is linear in $U_B, U_C$, while $\Pi_A$ has non-trivial dependence on $U_B$ and $U_C$).

For completeness we note that the process $|w_r\rangle$ can also be written as a circuit containing linear post-selected closed timelike curves (CTCs)~\cite{Araujo2017_CTC}
\begin{equation}
\includegraphics[width=6cm]{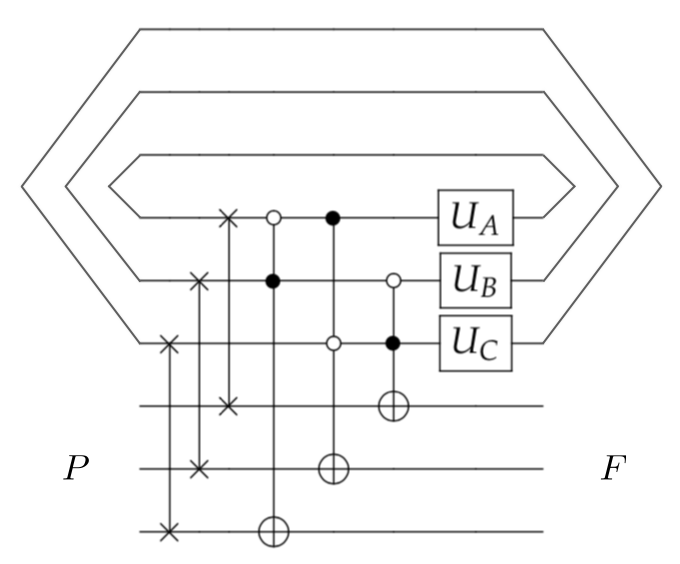}
\centering.
\label{eq:rev_swiss_CTC}
\end{equation}
In the above circuit, each loop can be (probabilistically) implemented by, on the left hand side of the loop, preparing a maximally entangled state $|\Phi^+ \rangle = \sum_i |i \rangle |i\rangle$, and on the right hand side, performing a Bell measurement and post-selecting on the outcome $|\Phi^+ \rangle$. We refer the reader to Ref.~\cite{Araujo2017_CTC} for a more complete discussion.

We now turn to the question of whether $|w_r\rangle$ can be used to violate causal inequalities. If the input state in the past is 
\begin{equation}
\ket{\psi}^P = \sum_\mathbf{u} \psi_\mathbf{u} \ket{\mathbf{u}},
\end{equation}
then we define the reduced tripartite process matrix $W_\psi \in \mathcal{L}(I \otimes O)$ by
\begin{equation}
W_\psi = \tr_{PF}\left( |\psi \rangle \langle \psi|^P \cdot |w_r \rangle  \langle w_r | \right)=  \sum_{\mathbf{u,v,x}} \psi_\mathbf{u}^* \psi_\mathbf{v} |\mathbf{u}\rangle \langle \mathbf{v}|^I\otimes |\mathbf{x}+f(\mathbf{u})\rangle \langle \mathbf{x} + f(\mathbf{v})|^O.
\end{equation}
A simple choice of input state is the uniform superposition $|\psi\rangle^P = \frac{1}{2\sqrt{2}}\sum_\mathbf{u} |\mathbf{u}\rangle$, which yields
\begin{equation}
W_\psi = \frac{1}{8}\sum_{\mathbf{u,v,x}}  |\mathbf{u}\rangle \langle \mathbf{v}|^I\otimes |\mathbf{x}+f(\mathbf{u})\rangle \langle \mathbf{x} + f(\mathbf{v})|^O
\end{equation}
This process violates the causal inequality
\begin{equation}
I_1 = P_{AB}(11|110) + P_{BC}(11|011) + P_{AC}(11|101) - P_{ABC}(111|111) \geq 0
\end{equation} 
described in Ref.~\cite{Abbott2016}. The strategy that achieves the violation was found by performing a seesaw optimisation~\cite{Branciard2016, Werner2001} on the parties' instruments:
\begin{align}
M_{0|0}^{A_I A_O}& =M_{0|0}^{B_I B_O}= M_{0|0}^{C_I C_O} =\frac{1}{2} \left( \id - \frac{1}{2} \id  \otimes X - \frac{1}{2} X \otimes X \right) \nonumber \\
 M_{1|0}^{A_I A_O}& =M_{1|0}^{B_I B_O}= M_{1|0}^{C_I C_O} = 0  \nonumber \\
M_{0|1}^{A_I A_O}& =M_{0|1}^{B_I B_O}= M_{0|1}^{C_I C_O} \approx \frac{1}{4}\left(\id - 0.97926 X  \otimes \id - 0.20258 Y  \otimes \id \right) \nonumber \\
M_{1|1}^{A_I A_O}& =M_{1|1}^{B_I B_O}= M_{1|1}^{C_I C_O} \approx \frac{1}{4} \bigg(\id + \id  \otimes X + 0.97926(X \otimes \id + X  \otimes X) + 0.20258 (Y  \otimes \id + Y  \otimes X) \bigg),
\end{align}
where $X,Y$ are Pauli matrices. We could not find a closed-form expression for the two numbers appearing above. The value of the violation that we obtain is $I_1 \approx -\frac{1}{4}$. The algebraic violation for this inequality is $I_1^{max} = -1$ and can be attained with process matrix correlations~\cite{Abbott2016}.

The reasons why $|w\rangle$ and $|w_r\rangle$ can violate causal inequalities seem to be fundamentally different. In the case of $|w\rangle$, the reason appears to come from a ``classical'' non-causal influence of the future on the past, since $|w\rangle$ still violates causal inequalities when seen as a classical process matrix~\cite{Baumeler2015}. However, in the case of $|w_r\rangle$, the phenomena seems related to the no-cloning theorem: quantum mechanics restricts the ways in which the future can depend on the past. The picture of the process $|w_r \rangle$ in terms of causal reference frames might help understanding the similarities between $|w_r\rangle$ and $|w\rangle$, and why they both lack a known physical realisation. Indeed, they both have the same feature: according to Alice's causal frame of reference, the future $\Phi_A(U_B, U_C)$ (resp. the past $\Phi_A(U_B, U_C)$) "contains" Bob and Charlie's events, in the sense that it is linear in both $U_B$ and $U_C$. However, despite this fact, the past $\Pi_A$ (resp. the future $\Phi_A$) still depends non-trivially on $U_B, U_C$.

In general, if $\Phi_A$ (resp. $\Pi_A$) is independent of $U_B, U_C$, then $\Pi_A$ (resp. $\Phi_A$) must be linear in $U_B, U_C$ in order for the process to be linear. Common sense intuitions about causally would imply that the converse is also true: that $\Phi_A$ being linear in $U_B, U_C$ should imply that $\Pi_A$ is independent of $U_B, U_C$. Indeed it seems reasonable to interpret, for example, the linearity of $\Phi_A$ in $U_B$ as meaning that Bob is localised in the future of $U_B$. Moreover, for the case of the quantum switch -- the only known example of a physically realisable non-causal process-- , something similar as the above holds. In the quantum switch as described from Alice's causal reference frame, Bob's event is delocalised in time, but one can reason as ``if the control qubit is in state $|0\rangle$, Bob is in the past, while if the control is $|1\rangle$ then Bob is in the future''. More formally, there exists projectors $|0\rangle \langle 0|^{E_A}$ and $|1\rangle \langle 1|^{E_A}$, such that $ \Phi_A(U_B) (\id^{A_I} \otimes |0 \rangle \langle 0|^{E_A})$ is linear in $U_B$, while $(\id^{A_O} \otimes |0 \rangle \langle 0|^{E_A}) \Pi_A(U_B)$ is independent of $U_B$ and such that $ (\id^{A_I} \otimes |1 \rangle \langle 1|^{E_A})\Pi_A(U_B)$ is linear in $U_B$, while $\Phi_A (U_B) (\id^{A_O} \otimes |1 \rangle \langle 1|^{E_A})$ is independent of $U_B$. 

We believe that the observations above could be formalised into a notion of "weakly causal processes", that would include causally ordered processes and the quantum switch as special cases, but not the processes $|w\rangle$ and $|w_r\rangle$. This would yield a more physical way -- beyond the violation of causal inequalities -- of explaining why $|w \rangle$ and $|w_r \rangle$ possess a stronger type of non-causality than the quantum switch.

\section{Comments on the link with time-delocalised subsystems}
\label{sec:oreshkov}

We comment on the link between our causal frames of reference and the time-delocalised subsystems introduced by Oreshkov in Ref.~\cite{Oreshkov2018}. We also prove that all multipartite pure processes admit a representation in terms of time-delocalised subsystems. Note however, that the existence of such a representation does not imply that all such processes can be physically realised (for example the processes of Eqs.~\eqref{eq:circuit_swiss} and \eqref{eq:circuit_rev_swiss} do not have a known physical realisation).

Let $|w\rangle$ be a $N$-partite pure process, with parties $A^1, A^2, ... A^N$ whose input and ouput Hilbert spaces have equal dimension: $A_I^{k} \cong A_O^{k}$. By making one of the parties (say $A^1$, w.l.o.g) perform a unitary $|U\rrangle^{A_I^1 A_O^1}$, we obtain the reduced $(N-1)$ party process
\begin{equation}
\label{eq:induced_w_oreshkov}
|\xi(U) \rrangle := \llangle U^*|^{A_I^1 A_O^1} |w\rangle.
\end{equation}
Since $|w\rangle$ is pure, $|\xi(U) \rrangle$ is also a pure process, and by Theorem~\ref{thm:pure_rank1} it is the Choi state of a unitary channel from $P \otimes A_O^2 \otimes ... \otimes A_O^N$ to $F \otimes A_I^2 \otimes ... \otimes A_I^N$. More generally, if $A^1$ performs a general CPTP map $M^{A_I^1 A_O^1}$, we have that $\tr_{A_I^1 A_O^1} \left(M^{A_I^1 A_O^1} |w \rangle \langle w|\right)$ is the Choi state of a CPTP map from $P A_O^2,..., A_O^N$ to $F, A_I^2,... A_I^N$~\footnote{This fact can be proved using the fact that $\tr_{A_I^1...A_I^N A_O^1... A_O^N F}\left(M^{A_I^1 A_O^1} \otimes \id^{A_I^1...A_I^N} \otimes \rho_2^{A_O^2} \otimes ... \otimes \rho_N^{A_O^N} | w \rangle \langle w | \right) = \id^P$ for all states $\rho_2^{A_O^2}, ... , \rho_N^{A_O^N}$.}. Therefore, the map $U \mapsto \xi(U)$ can also be interpreted as defining a pure single partite process, whose past Hilbert space is $\tilde{P} := P \otimes A_O^2 \otimes ... \otimes A_O^N$ and whose future Hilbert space is $\tilde{F} := F \otimes A_I^2 \otimes ... \otimes A_I^N$. The single-partite version of Theorem~\ref{thm:pure_implies_frames} allows us to find unitaries $T: \tilde{P} \to A_I \otimes E$ and $S: A_O \otimes E \to \tilde{F}$, where $E$ is a Hilbert space of dimension $d_E = d_A d_{\tilde{P}}$, such that
\begin{equation}
\xi(U) = S ( U \otimes \id^E) T.
\end{equation}
Equivalently,
\begin{equation}
\label{eq:w_time_deloc}
|w\rangle = \llangle \id|^{E E'} |T \rrangle^{\tilde{P}, A_I E} |S \rrangle^{A_O E', \tilde{F}},
\end{equation}
where $E'$ is an isomorphic copy of $E$. From the above equation it is manifest that $A_I$ is maximally entangled with some subspace $\tilde{A}_I$ of $\tilde{P}$ and that $A_O$ is maximally entangled with some subspace $\tilde{A}_O$ of $\tilde{F}$. These subspaces $\tilde{A}_I, \tilde{A}_O$ are called \textit{time-delocalised subsystems} in Ref.~\cite{Oreshkov2018}. Eq.~(\ref{eq:w_time_deloc}) above answers in the affirmative the question posed in the conclusion of Ref.~\cite{Oreshkov2018}, concerning the existence of a representation in terms of time-delocalised subsystems for multipartite pure processes. 

Oreshkov's decomposition is obtained by looking at the family of reduced process that one gets by fixing one parties' choice of unitary, as in Eq.~(\ref{eq:induced_w_oreshkov}). Our causal-frames description is complementary, in the sense that it is obtained by considering the family of reduced processes that one gets by fixing the choice of unitary for $N-1$ parties. This yields a decomposition of the process as
\begin{equation}
|w \rangle = \llangle \id|^{E_A E_A'} \sum_{i_2, ..., i_N} |\Pi_A(U_{i_2}, ..., U_{i_N}) \rrangle^{P, A_I E_A} |\Phi_A(U_{i_2}, ..., U_{i_N}) \rrangle^{A_O E_A', F} |U_{i_2}^* \rrangle^{A_I^2 A_O^2}  ... |U_{i_N}^*\rrangle^{A_I^N A_O^N},
\end{equation}
where the sum is over orthonormal bases of unitaries.

The two decompositions are related (here in the bipartite case) via
\begin{align}
\llangle U_B^*|^{B_I B_O} \llangle \id|^{E E'} |S \rrangle^{P B_O, A_I E} |T \rrangle^{A_O E', F B_I} = \llangle \id |^{E_A E_A'} |\Pi_A(U_B) \rrangle^{P, A_I E_A} |\Phi_A(U_B)\rrangle^{A_O E_A', F},
\end{align}
which can also be shown graphically as
\begin{equation}
\includegraphics[width=14cm]{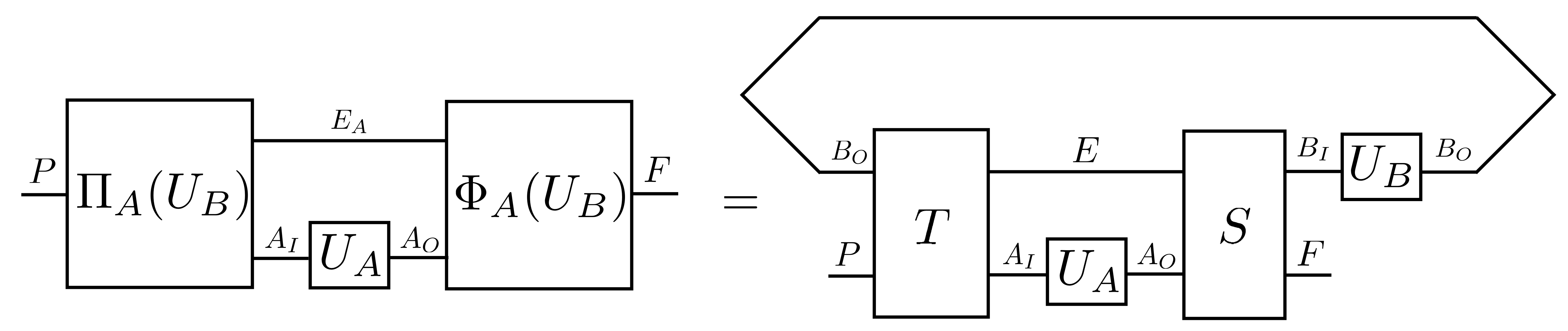},
\end{equation}
where the loop appearing in the circuit on the right hand side of the equation can be realised as a post-selected CTC, as discussed in the text after Eq.~\eqref{eq:rev_swiss_CTC} and in Ref.~\cite{Araujo2017_CTC}.

\section{The gravitational quantum switch}
\label{sec:gravity_switch}

The idea of causal reference frame that we introduced in this work can be used to analyse the gravitational quantum switch thought experiment~\cite{Zych2017}. We revisit this thought experiment, but making coordinates according to which Alice's event is localised, while Bob's event is delocalised and has a time coordinate that is entangled with the position of the mass. 

The Gedankenexperiment begins by considering the classical spacetime\footnote{A classical spacetime is an equivalence class, with respect to diffeomorphism, of tuples $(M, g, \Phi)$, where $M$ is a topological 4-manifold, $g$ is a semi-Riemannian metric and $\Phi$ describes the matter. The matter typically consists of a collection of fields, but $\Phi$ may also include the worldlines of point-like objects with negligible mass (test particles).} generated by a single massive spherically-symmetric object, and containing two localised laboratories -- Alice and Bob -- of negligible mass (such that they can be treated as test particles), whose worldlines are timelike curves $\lambda^A$, $\lambda^B$ which can be parametrised by the proper-time read by a clock inside the laboratories. Outside of the region occupied by the massive object, the metric is the Schwarzschild metric, which can be written in coordinates as
\begin{equation}
g= -\left(1 - \frac{2GM}{r c^2}\right) c^2 dt^2  + \left(1 - \frac{2GM}{r c^2}\right)^{-1}dr^2 + r^2 (d \theta^2 + \sin^2 \theta d\phi^2).
\end{equation}
With the above choice of coordinates, the position of the mass is fixed at $r=0$. We assume that the clocks are prepared such that they are initialised at $t=0$, and that their worldlines $\lambda_A , \lambda_B$ are held at the fixed coordinate angles $\theta = 0, \phi = 0$ and at fixed radial coordinates $R$ and $ R + h$, respectively. The proper time between two points $(t_1, r, 0 ,0)$ and $(t_2, r, 0,0)$ whose coordinate position is the same is $(t_2 - t_1) \sqrt{1 - \frac{2GM}{r c^2}}$, so that the worldlines of Alice and Bob are given, as a function of the proper time $\tau$ recorded by their respective clocks, by
\begin{align}
\lambda_1^A(\tau) = \left(t = \tau \left(1 - \frac{2GM}{R c^2} \right)^{-1/2}, r= R, \theta = 0, \phi = 0 \right)\\
\lambda_1^B(\tau) = \left(t = \tau \left(1 - \frac{2GM}{(R + h) c^2} \right)^{-1/2}, r= R + h, \theta = 0, \phi = 0 \right).
\end{align}
One can show~\cite{Zych2017} that there exists values of the parameters $R, h, M, \tau^*$ such that $\lambda_1^A(\tau^*)$ is in the causal future of $\lambda_1^B(\tau^*)$. We also consider a different spacetime, symmetrically related to the original one by a reflexion of the $z$-axis and a relabelling of the parties, in which the metric is identical but the worldlines of the parties are given instead by
\begin{align}
\lambda^A_2(\tau) = \left(t = \tau \left(1 - \frac{2GM}{(R + h)c^2} \right)^{-1/2}, r= R + h, \theta = \pi, \phi = 0 \right) \\
\lambda^B_2(\tau) = \left(t = \tau \left(1 - \frac{2GM}{R c^2} \right)^{-1/2}, r= R, \theta = \pi, \phi = 0 \right),
\end{align}
and we will then have that $\lambda_2^A(\tau^*)$ is in the causal past of $\lambda_2^B(\tau^*)$.

We will now make two different changes of coordinates (one for each of the two spacetimes), so that points on Alice's worldline corresponding to a particular value of the time read by Alice's clock have the same coordinate point in both spacetimes. Namely, we define $t_1 = t \left(1 - \frac{2GM}{R c^2} \right)^{1/2}$, $z_1 = r -R $ and $t_2 = t \left(1 - \frac{2GM}{(R+h) c^2} \right)^{1/2}$, $z_2 = r + R + h$. In the plane $\theta = \phi = 0$, we now have two different metrics which take the form
\begin{align}
\label{eq:schwarz1}
g_1 =- c^2 \frac{ \left(1 -  \frac{2GM}{(z + R) c^2}\right)}{\left(1 -  \frac{2GM}{R c^2}\right)} (dt)^2 + \left(1 - \frac{2GM}{(z + R) c^2}\right)^{-1}dz^2 \\
\label{eq:schwarz2}
g_2 =- c^2 \frac{ \left(1 -  \frac{2GM}{(z - R - h) c^2}\right)}{\left(1 -  \frac{2GM}{(R+h) c^2}\right)} (dt)^2 + \left(1 - \frac{2GM}{(z - R - h) c^2}\right)^{-1}dz^2,
\end{align}
where we use the same coordinate labels $z = z_1 =z_2$, $t = t_1 = t_2$ for the two different spacetimes. Note that at $z=0$, the $00$-component of the two metrics is equal to one, so that the coordinate time matches with Alice's proper time.~\footnote{This is a natural choice, but our only requirement on the coordinates is that Alice's events should occur at the same coordinate point in both spacetimes.}  We now have that $\lambda_1^B(\tau^*)$ occurs at a coordinate time smaller than $\tau^*$, while $\lambda_2^B(\tau^*)$ occurs at a coordinate time greater than $\tau^*$. Furthermore, there exists values of $R, h, M$ and $\tau^*$ (see Ref.~\cite{Zych2017} for details) such that $\lambda_1^B(\tau^*)$ is in the causal past of $\lambda_1^{A}(\tau^*)$ and $\lambda_2^B(\tau^*)$ is in the causal future of $\lambda_2^A(\tau^*)$. The situation in that case is depicted in Fig.~\ref{fig:gravity_switch}.

\begin{figure}[t]
\includegraphics[width=8cm]{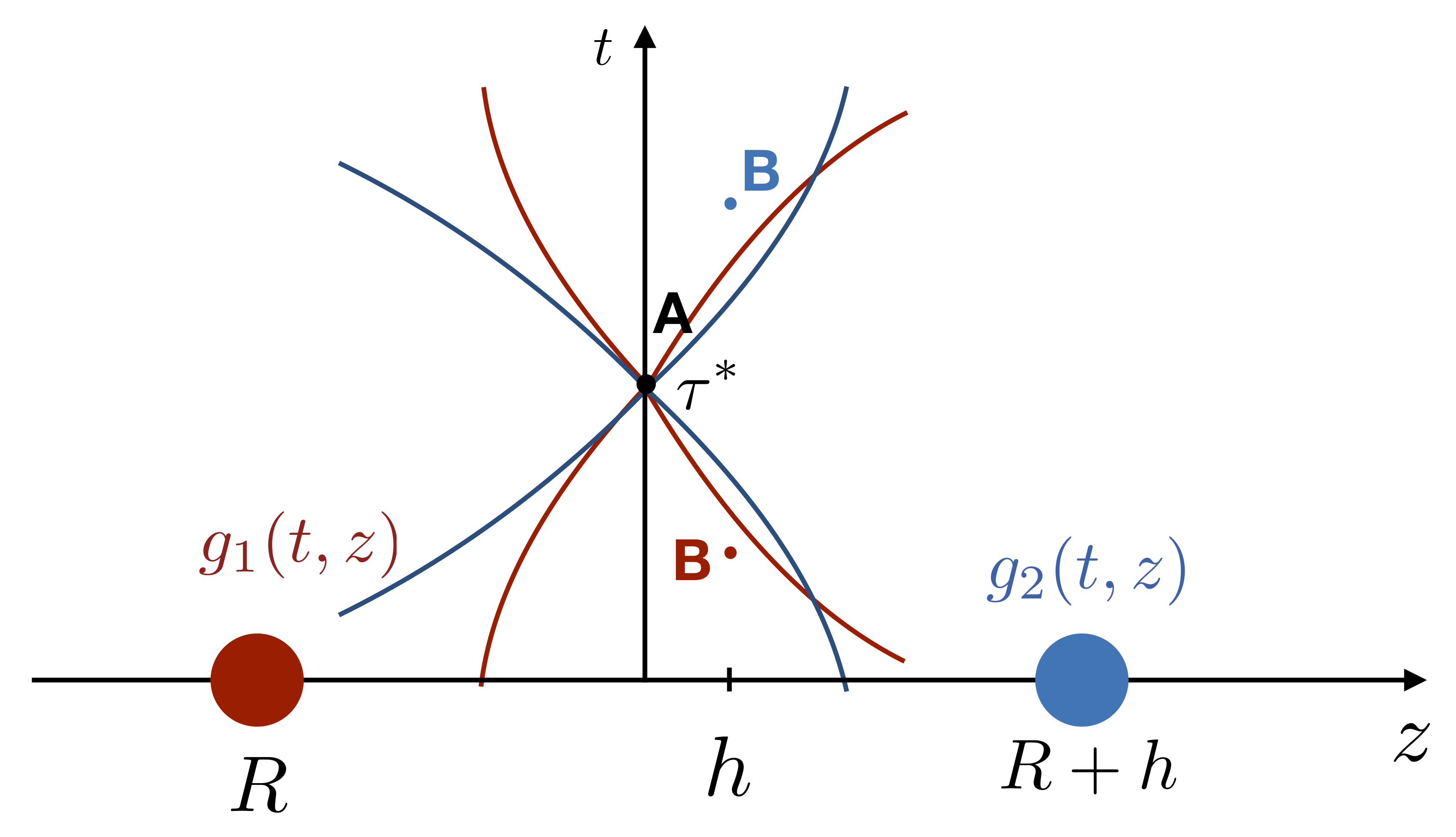}
\caption{Comparison of the two spacetimes $g_1, g_2$ using the coordinates described in the text. The relevant spacetime structures (the light-cone at $A$, the position of the mass, and the position of event $B$) for the metric $g_1$ are depicted in red, and in blue for the metric $g_2$. The light-cones at $A$ are curved due to the presence of matter. The events $A$ and $B$ are defined operationally with respect to local clocks. While event $A$ is represented at the same coordinate point for both spacetimes, the time at which event $B$ occurs depends on the position of the mass.}
\centering
\label{fig:gravity_switch}
\end{figure}

We now consider what happens when the initial state of the mass is in a superposition of the two positions (left and right). Under conservative physical assumptions, this should lead to a quantum superposition of the two spacetimes, which are described  in Alice's coordinates by the metrics $g_1, g_2$. These assumptions have been identified in Ref.~\cite{Zych2017} as 
\begin{enumerate}
\item Macroscopically distinct physical states can be assigned orthogonal quantum states.
\item For classical-like states, gravitational time dilation is correctly predicted by general relativity. 
\item The quantum superposition principle holds for all systems and at all scales. 
\end{enumerate}
Importantly, these assumptions are largely independent of the specific -- still unknown -- way in which gravity should be correctly quantised.

The representation of Fig.~\ref{fig:gravity_switch} is conceptually analogous to our previous treatment of the quantum switch in the causal frame of reference of event $A$. The position of the mass has the role of a quantum control for the coordinate point of event $B$. We assume that the state of the mass can be prepared by a party in the distant past, which we restrain for simplicity to a qubit subspace $\mathcal{H}_C$ spanned by the two classical-like states $|0 \rangle, |1 \rangle$, corresponding to the two positions for the mass in Fig.~\ref{fig:gravity_switch}. The parties at events $A$ and $B$ are receiving some localised quantum system with Hilbert space $\mathcal{H}_S$, on which they applying unitaries $U_A$, $U_B$. Then the evolution connecting states in the past to states in the future will be a unitary $\mathcal{G}(U_A, U_B)$ acting on the Hilbert space $\mathcal{H}_C \otimes \mathcal{H}_S$, and given by
\begin{equation}
\mathcal{G}(U_A, U_B) = |0 \rangle \langle 0|_C \otimes U_B U_A + |1 \rangle \langle 1|_C \otimes U_A U_B,
\end{equation}
which is precisely the quantum switch.

The only difference between the above treatment and the description given in Ref.~\cite{Zych2017} is that we use the coordinates $(t_1, z_1)$, and $(t_2, z_2)$ to compare the two classical spacetimes in Eqs.~(\ref{eq:schwarz1}, \ref{eq:schwarz2}), rather than using the Schwarzschild coordinates. More generally, we can perform an independent choice of coordinates for each of the two classical spacetimes that appear in the quantum superposition, and this yields different representations of the same physical situation. It would be interesting to investigate whether the usual diffeomorphism symmetry group of general relativity can be enlarged to take into account such "quantum-controlled" coordinate transformations.  A full theory of ``quantum coordinate changes'' does not exist, but steps have been taken in Ref.~\cite{Giacomini2017}, where quantum reference frames were studied in the context of Galilean relativity, and it was found that the spatial localisation of observers is frame-dependent.

\section{Conclusion}

We have introduced a local point of view on quantum causal structures, based on the relations between operationally defined events. To each event we have associated a causal reference frame in which that particular event is localised, and according to which it is possible to describe the process as the concatenation of unitaries according to some observer-dependent time. We have proven that there is an equivalence between pure processes and multilinear maps that admit a description in terms of consistent causal reference frames. We believe that this decomposition can be useful for further investigations of the pure process formalism, even if one does not endorse the physical interpretation proposed here.

We have studied the causal reference frames of non-causal processes, such as the quantum switch, and found that the time-localisation of events in such processes is in general observer-dependent. We have defined the time-reverse of a known tripartite causal inequality violating process~\cite{Baumeler2015, Araujo2017}, and we looked at the causal reference frames of both processes. We observed that these processes have a different structure than that of the quantum switch: for example one process has the property that in the causal reference frame of one event, the other events are arranged in such a way that they appear to be localised in the past (the past evolution depends linearly on the other parties' operations), while still having a non-trivial influence in the future. This suggests that the causal reference frames formalism can be a way to distinguish processes with a stronger form of non-causality (as witnessed by the violation of causal inequalities), from those whose non-causality is of a more benign nature such as the quantum switch and causally ordered processes. A more formal investigation of the various "types of non-causality", based on the language of causal reference frames, is deferred to future work.

As another application of causal reference frames, we have revisited a thought experiment in which the gravitational time dilation due to massive object in a quantum superposition of two position leads to a superposition of the causal order between events. We have shown that it is possible to describe the situation from the causal reference frame of one of the events by making appropriate coordinate transformations on the two classical spacetimes that appear in the superposition.

We finish with some comments about the possible relationships of our work with other approaches. There are superficial similarities between our framework and the framework of relative-locality~\cite{Amelino-Camelia2011}, in which a non-trivial geometry of momentum space leads to the observer-dependent locality of events. The ideas of relative locality have also been studied in the quantum regime~\cite{Amelino-Camelia2013}. It is currently unknown whether relative-locality allows for the violation of causal inequalities or the realisation of causally non-separable processes. Another interesting recent development is Hardy's operational reformulation of general relativity~\cite{Hardy2016}, and it is an open question whether our treatment of events in quantum causal structures can be reframed in his (potentially more general) formalism.

\section{Acknowledgements}

The authors acknowledge helpful discussions and comments from Mateus Ara\'{u}jo, \"{A}min Baumeler, Geoffroy Bergeron, Esteban Castro-Ruiz, Flaminia Giacomini and Philipp H\"{o}hn. We are grateful to David Trillo Fernandez for pointing out some mistakes in the proofs of the generalised Marcus theorem, and to Cyril Branciard for pointing out mistakes in Eqs.~\eqref{eq:1st_G_swiss}-\eqref{eq:last_G_swiss}. We acknowledge the support of the Austrian Science Fund (FWF) through the Doctoral Programme CoQuS and the project I-2526-N27 and the research platform TURIS, as well as support from the European Commission via Testing the Large-Scale Limit of Quantum Mechanics (TEQ) (No. 766900) project. P.A.G. acknowledges support from the Fonds de Recherche du Qu\'{e}bec -- Nature et Technologies (FRQNT). This publication was made possible through the support of a grant from the John Templeton Foundation. The opinions expressed in this publication are those of the authors and do not necessarily reflect the views of the John Templeton Foundation.


\providecommand{\href}[2]{#2}\begingroup\raggedright\endgroup

\appendix

\section{Choi-Jamio\l kowski isomorphism}
\label{app:CJ}

Let $A_I, A_O$ be Hilbert spaces of finite dimension $d_{A_I}$ and $d_{A_O}$, respectively, and let $\{i \rangle^{A_I} \}_{i=0}^{d_{A_I}-1}$, $\{ |j \rangle^{A_O}\}_{j=0}^{d_{A_O} - 1}$ be a choice of bases for $A_I$ and $A_O$. We denote by $\mathcal{L}(A_I), \mathcal{L}(A_O)$ the vector space of linear operators acting on $A_I, A_O$. 

We follow Refs.~\cite{Araujo2017,Araujo2017_CTC} in defining the Choi-Jamio\l kowski (CJ) isomorphism. We warn the reader that there exist different conventions in the literature. For any linear transformation $K : A_I \to A_O$, we define the ``double-ket''
\begin{equation}
|K \rrangle^{A_I A_O} = \sum_{i} |i \rangle^{A_I}\otimes (K |i \rangle)^{A_O} 
\end{equation}
Let $K, M : A_I \to A_O$. Then the inner product of $|K \rrangle$ and $|M \rrangle$ in $A_I A_O$ is equal to the Hilbert-Schmidt inner product of the operators $K, M$:
\begin{equation}
\llangle M | N \rrangle^{A_I A_O} = \tr \left( M^\dagger K \right).
\end{equation}
If $d_{A_I} = d_{A_O} =: d_A$, we will say that a set of unitaries $\{U_i : A_I \to A_O \}_{i = 1}^{d_A^2}$ is an orthonormal basis if $\{|U_{i}\rrangle \}_{i = 1}^{d_A^2}$is a basis for $A_I \otimes A_O$ and if $\llangle U_i | U_j \rrangle = d_A \delta_{ij}$.

We will often make use of the two easily verified identities
\begin{align}
\label{eq:choi_get_T}
|K \rrangle^{A_I A_O}&=\sum_i (K^T |i\rangle)^{A_I} \otimes |i \rangle^{A_O} \\ 
\label{eq:transp_pure_choi} |K \rrangle^T &= \llangle K^*|,
\end{align},
where $K^T: A_O \to A_I$ is the transpose of $K$, defined by $\langle i |^{A_I}K^T |j\rangle^{A_O}= \langle j |^{A_O} K |i \rangle^{A_I}$, and $K^*: A_I \to A_O$ is the complex conjugate $\langle j |^{A_O} K^{*} |i \rangle^{A_I} = \left(\langle j |^{A_O} K |i \rangle^{A_I}\right)^*$.

We also note that for any vector $|v\rangle^{A_I A_O}$, the isomorphism can be inverted to get the matrix $K_v$ for which $|K_v\rrangle = |v \rangle$. The explicit inversion formula is
\begin{equation}
K_v = |v \rangle^{T_{A_I}},
\end{equation},
where $T_{A_I}$ is the partial transpose on the $A_I$ Hilbert space, whose definition in the computational basis is
\begin{equation}
(|i\rangle^{A_I} |j\rangle^{A_O})^{T_{A_I}} = |j \rangle^{A_O} \langle i|^{A_I}.
\end{equation}

We can straightforwardly extend the "pure" definition of the Choi isomorphism to get a "mixed" version. Let $\mathcal{M}: \mathcal{L}(A_I) \to \mathcal{L}(A_O) $ be a linear map. It's Choi operator is defined as
\begin{equation}
M^{A_I A_O} = \sum_{ij} |i \rangle \langle j |^{A_I} \otimes \mathcal{M}(|i \rangle\langle j|)^{A_O},
\end{equation}
which is a positive operator if and only if $\mathcal{M}$ is completely-positive (CP)~\cite{Choi1975}. One may check that the isomorphism can be inverted by using the formula
\begin{equation}
\mathcal{M}(\rho) = \tr_{A_I} \left(M^{A_I A_O} \cdot \rho^{T_{A_I}} \otimes \id^{A_O} \right).
\end{equation}
The above equation can be used to show that $\mathcal{M}$ is trace-preserving iff $\tr_{A_O} M^{A_I A_O} = \id^{A_I}$. 

We also collect here the following identity, allowing to express the product of matrices in the Choi representation 
\begin{proposition}
\label{prop:mult_ket}
Let $P, A_I, A_O, F$ be isomorphic finite dimensional Hilbert spaces, and let $V_1 : P \to A_I$, $V_2: A_I \to A_O$, $V_3: A_O \to F$ be linear maps. Then
\begin{equation}
|V_3 V_2 V_1 \rrangle^{PF} = \llangle V_2^*|^{A_I A_O} |V_1 \rrangle^{P A_I} |V_3 \rrangle^{A_O F}.
\end{equation}
\end{proposition}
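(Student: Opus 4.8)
The plan is to establish the identity by a direct computation in fixed orthonormal bases, expanding the two double-kets and the double-bra on the right-hand side according to the definitions of Appendix~\ref{app:CJ} and then carrying out the contraction over the shared $A_I$ and $A_O$ factors.

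Concretely, I would first write $|V_1\rrangle^{P A_I} = \sum_p |p\rangle^P \otimes (V_1|p\rangle)^{A_I}$ and $|V_3\rrangle^{A_O F} = \sum_o |o\rangle^{A_O}\otimes(V_3|o\rangle)^F$, so that their tensor product is a vector in $P\otimes A_I\otimes A_O\otimes F$. Writing $(V_2)_{oi} = \langle o|^{A_O} V_2 |i\rangle^{A_I}$ for the matrix elements of $V_2$, the definition of the double-ket gives $|V_2^*\rrangle^{A_I A_O} = \sum_{i,o}\overline{(V_2)_{oi}}\,|i\rangle^{A_I}|o\rangle^{A_O}$, hence $\llangle V_2^*|^{A_I A_O} = \sum_{i,o} (V_2)_{oi}\,\langle i|^{A_I}\langle o|^{A_O}$. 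The core step is then the contraction of this bra against the $A_I A_O$ factor: using $\langle i| V_1|p\rangle = (V_1)_{ip}$ one finds $\llangle V_2^*|^{A_I A_O}\big((V_1|p\rangle)^{A_I}\otimes|o'\rangle^{A_O}\big) = \sum_i (V_2)_{o'i}(V_1)_{ip} = (V_2 V_1)_{o'p}$. Re-summing the surviving terms against $(V_3|o'\rangle)^F$ reconstructs $\sum_p |p\rangle^P\otimes (V_3 V_2 V_1|p\rangle)^F$, which is exactly $|V_3 V_2 V_1\rrangle^{PF}$ by definition.

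The only genuine subtlety --- and the place where I expect to spend the most care --- is the bookkeeping of complex conjugation and transposition. The appearance of $V_2^*$ (rather than $V_2$) in the statement is precisely what is needed: since forming the double-bra involves an additional Hermitian conjugation, and $(V_2^*)^\dagger = V_2^T$, the factor that ends up multiplying $(V_1)_{ip}$ is $(V_2^T)_{io'} = (V_2)_{o'i}$, yielding ordinary matrix multiplication rather than a Hilbert--Schmidt pairing. I would cross-check this against the conventions $\llangle M|N\rrangle = \tr(M^\dagger N)$ and Eq.~(\ref{eq:transp_pure_choi}) of Appendix~\ref{app:CJ}. An index-free alternative would replace the componentwise sums by the ricochet identities $(\id\otimes Y)|K\rrangle = |YK\rrangle$ and $(X\otimes\id)|K\rrangle = |KX^T\rrangle$ together with the contraction rule $\llangle M|\big(|a\rangle\otimes|b\rangle\big) = \langle a^*|M^\dagger|b\rangle$, pushing $V_2$ and $V_3$ through the contraction to obtain the same result; either route is routine once the conventions are pinned down.
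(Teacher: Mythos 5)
Your proof is correct and follows essentially the same route as the paper's: a direct basis expansion of the three double-kets followed by contraction over $A_I\otimes A_O$, with the key bookkeeping point $(V_2^*)^\dagger = V_2^T$ turning the Hilbert--Schmidt pairing into ordinary matrix multiplication. The only cosmetic difference is that the paper expands $|V_1\rrangle^{PA_I}$ in the transposed form $\sum_j (V_1^T|j\rangle)^P\otimes|j\rangle^{A_I}$ of Eq.~(\ref{eq:choi_get_T}) while you keep $V_1$ acting on the $A_I$ factor; the contraction and conclusion are identical.
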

\begin{proof}
\begin{align}
\llangle V_2^*|^{A_I A_O} |V_1 \rrangle^{P A_I} |V_3 \rrangle^{A_O F} &= \sum_{ijk} \left(\langle k |^{A_I}\otimes (\langle k| V_2^T)^{A_O}  \right) \cdot \left((V_1^T |j \rangle)^P \otimes  |j \rangle^{A_I} \otimes |i \rangle^{A_O} \otimes (V_3|i\rangle)^{F} \right)\\
&= \sum_{ij}( \langle j| V_2^T |i \rangle) \cdot (V_1^T |j \rangle)^P (V_3|i\rangle)^{F} \\
& = \sum_{ij} (V_1^T |j \rangle)^P \otimes  (V_3 |i\rangle \langle i| V_2 |j \rangle)^F \\
&= \sum_j (V_1^T |j \rangle)^P \otimes  (V_3 V_2 |j \rangle)^F \\
&= \sum_j |j \rangle^P \otimes (V_3 V_2 V_1|j \rangle)^F = |V_3 V_2 V_1 \rrangle^{PF}.
\end{align}
\qed
\end{proof}

\section{Generalisation of Marcus' Theorem}
\label{app:marcus}
In this section we recall Marcus' theorem~\cite{Marcus1959}, and and its generalisation due to Cheung and Li~\cite{Cheung2003}. Let $\mathcal{H}_1, \mathcal{H}_2$ be Hilbert spaces, and let $f: \mathcal{L}(\mathcal{H}_1) \to \mathcal{L}(\mathcal{H}_2)$ be a map. We say that $f$ is \textit{unitarity preserving} if $f(U)$ is unitary for all unitaries $U \in \mathcal{L}(\mathcal{H}_1)$. In what follows Hilbert spaces are always finite-dimensional.
\begin{theorem}
\label{thm:marcus}
Let $\mathcal{H}$ be a finite-dimensional Hilbert space, and let $f: \mathcal{L}(\mathcal{H}) \to\mathcal{L}(\mathcal{H}) $ be a unitarity preserving linear map. Then either
\begin{align}
f(U) &= A U B\\
\mathrm{or} \, \, f(U) &= A U^T B,
\end{align}
where $A, B$ are constant unitary matrices, and where $T$ is the transpose in the computational basis.
\end{theorem}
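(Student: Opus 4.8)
The plan is to reduce $f$ to a unital, unitarity-preserving linear map and to show that such a map is necessarily a Jordan $*$-automorphism of $M_d(\mathbb{C})$, where $d = \dim \mathcal{H}$, which the classical structure theory then forces to be inner up to transposition. First I would normalise: since $\id$ is unitary, $A := f(\id)$ is unitary, so $f_1 := A^{\dagger} f$ is linear, unitarity-preserving, and satisfies $f_1(\id) = \id$. It suffices to prove that $f_1(X) = V X V^{\dagger}$ or $f_1(X) = V X^T V^{\dagger}$ for some unitary $V$, since then $f(U) = (AV)\, U\, V^{\dagger}$ or $f(U) = (AV)\, U^T\, V^{\dagger}$, which is the desired form with $B = V^{\dagger}$.

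The core of the argument extracts algebraic structure from unitarity preservation by differentiating one-parameter unitary groups. For Hermitian $H$, the path $t \mapsto f_1(e^{itH})$ is unitary and equals $\id$ at $t=0$; expanding $f_1(e^{itH}) = \id + it\, f_1(H) - \tfrac{t^2}{2} f_1(H^2) + O(t^3)$ by linearity, the first-order condition that the derivative $i f_1(H)$ be anti-Hermitian gives that $f_1(H)$ is Hermitian, so $f_1$ preserves adjoints. Imposing unitarity at second order, $f_1(e^{itH})^{\dagger} f_1(e^{itH}) = \id$, the vanishing of the $t^2$ coefficient yields $f_1(H^2) = f_1(H)^2$ for every Hermitian $H$. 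Polarising this identity in $H$ and extending by $*$-linearity to all of $M_d(\mathbb{C})$ shows that $f_1$ preserves the Jordan (anticommutator) product, i.e. $f_1$ is a unital Jordan $*$-homomorphism.

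It then remains to invoke the classification of such maps. The kernel of a Jordan homomorphism is a Jordan ideal; since $M_d(\mathbb{C})$ is simple and of characteristic zero, its only Jordan ideals are $\{0\}$ and the whole algebra, and as $f_1(\id) = \id \neq 0$ the kernel is trivial, so $f_1$ is injective and hence, by equality of dimensions, a bijective Jordan $*$-automorphism. By the Jacobson--Rickart theorem a bijective Jordan homomorphism of a simple algebra is either an algebra automorphism or an anti-automorphism; the Skolem--Noether theorem then writes it as $X \mapsto S X S^{-1}$ or $X \mapsto S X^T S^{-1}$ for some invertible $S$. Finally, $*$-preservation forces $S^{\dagger} S$ to commute with every $X$, hence to be a positive scalar, so $S$ can be rescaled to a unitary $V$, giving $f_1(X) = V X V^{\dagger}$ or $f_1(X) = V X^T V^{\dagger}$ and completing the reduction.

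The elementary differentiation steps are routine; the substantive obstacle is the last paragraph, namely the structure theorem that a Jordan $*$-automorphism of $M_d(\mathbb{C})$ is inner up to transposition. If one prefers a self-contained argument to citing Jacobson--Rickart and Skolem--Noether, the work lies in showing that the projection-to-projection map $f_1$ (note $f_1(P)^2 = f_1(P^2) = f_1(P)$ for projections $P$) either preserves or reverses the ordinary product consistently across all orthogonal families of rank-one projections, which is precisely where the global dichotomy between the two cases must be pinned down.
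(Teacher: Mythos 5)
Your proof is correct, but it takes a genuinely different route from the paper — which in fact offers no proof of this statement at all: it recalls it as Marcus' classical result with a citation, noting only that, via the Choi isomorphism, it is equivalent to the classification of two-qudit channels preserving the set of maximally entangled states. Your argument checks out in full: the normalisation $f_1 = f(\id)^\dagger f$, the differentiation of $t \mapsto f_1(e^{itH})$ giving Hermiticity preservation at first order and $f_1(H^2) = f_1(H)^2$ at second order, the polarisation to a unital Jordan $*$-homomorphism, the triviality of the kernel (Herstein's theorem on Jordan ideals of simple rings), the dichotomy from Jacobson--Rickart, the Skolem--Noether form $X \mapsto S X S^{-1}$ or $X \mapsto S X^T S^{-1}$, and the final observation that $*$-preservation forces $S^\dagger S$ to be central, hence a positive scalar, so that $S$ rescales to a unitary. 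The instructive comparison is with the paper's Appendix~\ref{app:marcus}, which proves the rectangular generalisation (Theorem~\ref{thm:gen_marcus}) needed in the main text by purely algebraic identities: Lemma~\ref{lemma:f_psi_proj} shows $f(P)$ is a projector by exploiting unitarity of $U_P = \id - 2P$ and of $\tfrac{1}{\sqrt{2}}(\id + iU_P)$ — a calculus-free analogue of your second-order expansion, and exactly the kind of projector bookkeeping your closing paragraph gestures at — and then reduces to the square case, which is still quoted from Marcus as a black box. Your route buys conceptual transparency and a principled source for the transpose dichotomy; its cost is reliance on nontrivial structure theorems, and the fact that the bijectivity-based results you invoke do not transfer directly to the non-square setting $d_2 = k d_1$ of Theorem~\ref{thm:gen_marcus}, where surjectivity fails and one would need either the stronger ``sum of homomorphism and anti-homomorphism'' form of Jacobson--Rickart or the elementary projector analysis the paper employs.
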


In virtue of the Choi isomorphism, this theorem is equivalent to the fact that the only channels on a system of two qudits that preserve the set of maximally entangled states are products of local unitaries and swap~\cite{Brylinski2002, Hulpke2006}. 

We now state the generalisation of Marcus' theorem to the case where $f$ sends $d_1 \times d_1$ matrices to $d_2 \times d_2$ matrices, with $d_2$ an integer multiple of $d_1$.~\cite{Cheung2003}
\begin{theorem}(Cheung and Li~\cite{Cheung2003})
\label{thm:cheung}
Let $\mathcal{H}_1, \mathcal{H}_2$ be Hilbert spaces with dimensions $d_1$ and $d_2$, respectively. Let $f: \mathcal{L}(\mathcal{H}_1) \to \mathcal{L}(\mathcal{H}_{2})$ be a unitarity preserving linear map. Then
\begin{equation}
f(U) = S \left( (U \otimes \id_s) \oplus ( U^T \otimes \id_r) \right)T,
\end{equation}
where $d_2 = d_1(s + r)$ and $S, T \in \mathcal{L}(\mathcal{H}_2)$ are constant unitary matrices.
\end{theorem}

Note that if $f$ is a unitary preserving linear map as above, we can usually also assume that it is unital i.e. $f(\id) = \id$ without loss of generality, because if $f$ is a general unitary preserving linear map, then $f'(U) := f(\id)^\dagger f(U)$ is unital.

%

\section{The generalised Marcus theorem in the Choi representation}
\label{app:marcus_choi}

Let $A$ and $B$ be Hilbert spaces of dimensions $d_A, d_B$, respectively. Let $f: \LL(A) \to \LL(B)$ be a unitary preserving linear map. Then Theorem~\ref{thm:cheung} amounts to the following: There exists a direct sum decomposition of $B = (A^1 \otimes C) \oplus (A^2 \otimes D)$, where $A^1, A^2$ are copies of $A$ and where $C, D$ are (potentially zero-dimensional) Hilbert spaces whose dimension satisfy $(d_C + d_D) d_A = d_B$, and such that
\begin{equation}
|f(U) \rrangle^{B_I B_O} = (S^{B_O} \otimes (T^T)^{B_I} ) F |U \rrangle^{A_I A_O},
\end{equation}
where $S, T$ are constant unitaries and where
\begin{equation}
F = \sum_{i, j = 0}^{d_A - 1} \left( |i \rangle^{A_I^1} |j \rangle^{A_O^1} \langle i|^{A_I} \langle j |^{A_O} \otimes |\id\rrangle^{C_I C_O} \oplus   |i \rangle^{A_I^2} |j \rangle^{A_O^2} \langle j|^{A_I} \langle i |^{A_O} \otimes |\id\rrangle^{D_I D_O} \right).
\end{equation}
To prove this claim it suffices to verify that $F |U \rrangle = |(U^{A^1} \otimes \id^{C} ) \oplus ((U^T)^{A^2} \otimes \id^{D}) \rrangle$, which is straightforward. 

We first consider the simpler case where $S = T = \id$. We apply the Choi isomorphism to $F$ to obtain
\begin{equation}
|F \rrangle = |\id \rrangle^{A_I A_I^1} |\id \rrangle^{A_O^1 A_O}  |\id \rrangle^{C_I C_O} \oplus |\id \rrangle^{A_I A_O^2} |\id \rrangle^{A_O A_I^1} |\id \rrangle^{D_I D_O}.
\end{equation}
One can obtain information about the dimensions of $C$ and $D$ by taking some partial traces on $|F \rrangle$, as follows:
\begin{align}
\tr_{B_I} |F \rrangle \llangle F| &= \id^{A_I C_O} \otimes |\id \rrangle \llangle \id|^{A_O A_O^1} + \id^{A_O D_O} \otimes |\id \rrangle \llangle \id|^{A_I A_O^2} \\
\tr_{A_I B_I} |F \rrangle \llangle F| &= d_A |\id \rrangle \llangle \id|^{A_O A_O^1} \otimes \id^{C_O} + \id^{A_O A_O^2 D_O}
\end{align}
so this reduced state has eigenvalues $d_A^2$ and $1$, with respective multiplicities $d_C$ and $d_A^2d_D$.

For the general case, if we define $Q =( S^{B_O} \otimes (T^T)^{B_I} )F$, so that $|f(U) \rrangle = Q |U\rrangle$, we have
$|Q \rrangle = S^{B_O} \otimes (T^{T})^{B_I} |F \rrangle$, which implies that
\begin{equation}
\tr_{A_I B_I} |Q \rrangle \llangle Q| = S^{B_O} \left( \tr_{A_I B_I} |F \rrangle \llangle F| \right) (S^\dagger)^{B_O}
\end{equation}
has the same eigenvalues as $\tr_{A_I B_I} |F \rrangle \llangle F|$.

\section{The time reversal of a pure process}
\label{app:time_rev}

Let $|w \rangle$ be a pure process vector whose parties have equal input and output Hilbert space dimensions. Taking the complex conjugate and swapping inputs and outputs yield a valid process
\begin{equation}
|w_{r}\rangle^{P A_I A_O B_I B_O F} := (SWAP_{PF} \otimes SWAP_{A_I A_O} \otimes SWAP_{B_I B_O}) |w^*\rangle^{P A_I A_O B_I B_O F},
\end{equation}
which we call the time-reversal of $|w\rangle$. We now show that $|w_r \rangle$ is a valid process.
\begin{align}
|w_r\rangle &= \frac{1}{d_A d_B} \sum_{i,j} |\mathcal{G}(U_i,V_j)^* \rrangle^{FP} |U_i \rrangle^{A_O A_I} |V_j \rrangle^{B_O B_I} \\
&= \frac{1}{d_A d_B}\sum_{i,j} |\mathcal{G}(U_i,V_j)^\dagger \rrangle^{PF} |U_i^T \rrangle^{A_I A_O} |V_j^T \rrangle^{B_I B_O} \\
&=\frac{1}{d_A d_B} \sum_{i,j} |\mathcal{G}(U_i^\dagger ,V_j^\dagger)^\dagger \rrangle^{PF} |U_i^* \rrangle^{A_I A_O} |V_j^* \rrangle^{B_I B_O} \\
& = \frac{1}{d_A d_B}\sum_{i,j} |\mathcal{G}_{r}(U_i,V_i) \rrangle^{PF} |U_i^* \rrangle^{A_I A_O} |V_j^* \rrangle^{B_I B_O},
\end{align}
where in the third line we made a basis change $U_i \mapsto U_i^\dagger, V_j \mapsto V_j^\dagger$. The last equation shows that the reversed process is equivalently defined by the map
\begin{equation}
\mathcal{G}_{r}(U, V) := \mathcal{G}(U^\dagger ,V^\dagger)^\dagger.
\end{equation}
The map $\mathcal{G}_{r}$ admits a decomposition into causal frames:
\begin{align}
\mathcal{G}_{r}(U_A, U_B) &= \left( \Phi_A(U_B^\dagger) (U_A^\dagger \otimes \id^{E_A} ) \Pi_A(U_B^\dagger) \right)^\dagger \\
&= \Pi_A(U_B^\dagger)^\dagger (U_A \otimes \id^{E_A} ) \Phi_A(U_B^\dagger)^\dagger.
\end{align}
The above equation shows that Alice's causal reference frame is given by $\Phi_A^{r}(U_B) = \Pi_A(U_B^\dagger)^\dagger$, $\Pi_A^r(U_B) = \Phi_A(U_B^\dagger)^\dagger$, and similarly for Bob. Theorem~\ref{thm:frames_implies_pure} then implies that $|w_r \rangle$ is a valid pure process.

As a simple example, and as justification for calling this operation ``time-reversal'', consider the single partite process $\mathcal{G}(U) = A U B$, where $A,B$ are fixed unitaries. Then it's time-reverse is $\mathcal{G}_{r}(U) = B^\dagger U A^\dagger$.

\end{document}